\documentclass[runningheads]{llncs}
\pdfoutput=1

\usepackage{lmodern}

\usepackage[T1]{fontenc}
\usepackage{lineno}

\usepackage{color}

\usepackage{amsmath,amssymb}
\usepackage{thm-restate}
\usepackage{enumitem}
\usepackage{csquotes}
\usepackage{xspace}
\usepackage[hidelinks]{hyperref}
\usepackage[noabbrev,capitalise]{cleveref}
\usepackage{graphicx}
\usepackage{todonotes}

\spnewtheorem{observation}{Observation}{\bfseries}{\itshape}

\newcommand{\lightdfn}[1]{\emph{#1}}  

\newcounter{oldfootnotevalue}
\newcommand{\TODO}[2][\empty]{%
  \unskip
  {%
    \setcounter{oldfootnotevalue}{\value{footnote}}%
    \renewcommand{\thefootnote}{{\normalfont\bfseries\sffamily\color{red} TODO}}%
    \footnote{\normalfont\color{red}\mbox{}\hspace{1.2em}\mbox{}\ifx #1\empty\empty\else #1:\ \fi{#2}}%
    \setcounter{footnote}{\value{oldfootnotevalue}}%
  }%
}
\newcommand{\QUESTION}[2][\empty]{%
  \unskip
  {%
    \setcounter{oldfootnotevalue}{\value{footnote}}%
    \renewcommand{\thefootnote}{{\normalfont\bfseries\sffamily Q}}%
    \footnote{\normalfont\color[rgb]{0,0,0.75}\ifx #1\empty\empty\else#1:\ \fi{#2}}%
    \setcounter{footnote}{\value{oldfootnotevalue}}%
  }%
}

\newcommand\blfootnote[1]{%
	\begingroup
	\renewcommand\thefootnote{}\footnote{#1}%
	\addtocounter{footnote}{-1}%
	\endgroup
}

\newcommand{\np}{\textsf{NP}\xspace}
\renewcommand{\qed}{\hfill $\square$}
\newcommand{\subcaptionheading}[1]{\textbf{\textsf{#1}}\ }

\newenvironment{conditions}%
               {\begin{enumerate}[label=(C\arabic*), left=0pt]}%
               {\end{enumerate}}
\Crefname{condition}{}{Conditions}
\creflabelformat{condition}{#2#1#3}
\crefrangelabelformat{condition}{#3#1#4--#5#2#6}

\newlist{proofcases}{enumerate}{1}%
\setlist[proofcases]{label=(\arabic*),wide=0pt}
\newcommand{\caseheading}[1]{{\itshape\sffamily #1}}
\Crefname{proofcasesi}{Case}{Cases}

%
               {\begin{enumerate}[label=(\alph*), left=0pt]}%
               {\end{enumerate}}
\Crefname{characterization}{}{Characterizations}
\creflabelformat{characterization}{(#2#1#3)}
\crefrangelabelformat{characterization}{(#3#1#4--#5#2#6)}

\newenvironment{properties}%
               {\begin{enumerate}[label=(P\arabic*), left=0pt,nosep]}%
               {\end{enumerate}}
\Crefname{property}{}{Properties}
\creflabelformat{property}{#2#1#3}
\crefrangelabelformat{property}{#3#1#4--#5#2#6}

\newlist{arcproperties}{enumerate}{1}%
\setlist[arcproperties]{label=({A}\arabic*),left=0pt,nosep}
\Crefname{arcpropertiesi}{}{Properties}
\creflabelformat{arcpropertiesi}{#2#1#3}
\crefrangelabelformat{arcpropertiesi}{#3#1#4--#5#2#6}

\newlist{rainbowconfigurations}{enumerate}{1}%
\setlist[rainbowconfigurations]{label=$\mathcal{C}_{\arabic*}$:, ref=$\mathcal{C}_{\arabic*}$, left=0pt}
\crefname{rainbowconfigurationsi}{}{Configurations}
\creflabelformat{rainbowconfigurationsi}{#2#1#3}
\crefrangelabelformat{rainbowconfigurationsi}{#3#1#4 -- #5#2#6}

\definecolor{red}{rgb}{0.890196,0.101961,0.109804}  
\definecolor{blue}{rgb}{0.121569,0.470588,0.705882}  
\definecolor{green}{rgb}{0.2,0.627451,0.172549}  


\newcommand{\colonename}{red\xspace}

\newcommand{\coltwoname}{blue\xspace}

\title{On plane cycles in geometric multipartite graphs}
%
%
\author{Marco Ricci\inst{1}\orcidID{0000-0002-4502-8571} \and
	Jonathan Rollin\inst{1}\orcidID{0000-0002-6769-7098} \and
	André Schulz\inst{1}\orcidID{0000-0002-2134-4852} \and
	Alexandra Weinberger\inst{1}\orcidID{0000-0001-8553-6661}
}
\authorrunning{M. Ricci, et al.}
%
\institute{FernUniversität in Hagen, Germany
	\email{firstname.lastname@fernuni-hagen.de}}

\graphicspath{{figs/}}

\begin{document}

\maketitle
\begin{abstract}
	A geometric graph is a drawing of a graph in the plane where the vertices are drawn as points in general position and the edges as straight-line segments connecting their endpoints. It is plane if it contains no crossing edges.
	We study plane cycles in geometric complete multipartite graphs.
	We prove that if a geometric complete multipartite graph contains a plane cycle of length~$t$, with~$t\geq 6$, it also contains a smaller plane cycle of length at least~$\lceil t/2\rceil +1$.
	We further give a characterization of geometric complete multipartite graphs that contain plane cycles with a color class appearing at least twice.
	For geometric drawings of $K_{n,n}$, we give a sufficient condition under which they have, for each $s\leq n$, a plane cycle of length $2s$.
	We also provide an algorithm to decide whether a given geometric drawing of $K_{n,n}$ contains a plane Hamiltonian cycle in time $O(n \log n + nk^2) + O(k^{5k})$, where $k$ is the number of vertices inside the convex hull of all vertices.	
	Finally, we prove that it is NP-complete to decide if a subset of edges of a geometric complete bipartite graph $H$ is contained in a plane Hamiltonian cycle in~$H$.%
	\blfootnote{\textit{An extended abstract of this work is available at the 51st International Workshop on Graph-Theoretic Concepts in Computer Science (WG2025).}}
\end{abstract}

\section{Introduction}

A \lightdfn{geometric graph} is a drawing of a graph $G$ in the plane such that its vertices in $V(G)$ are drawn as disjoint points in general position (i.e., no three points lie on a line) and its edges in $E(G)$ are drawn as straight-line segments connecting their respective endpoints.
We associate the vertices of the graph with their corresponding points in the plane and treat them interchangeable.
A geometric graph is called \lightdfn{plane} if it contains no crossing edges. We refer to a geometric graph whose abstract graph
is $G$ as a \lightdfn{drawing of $G$}.
A geometric (host) graph $H$ \lightdfn{contains} a plane graph $G$ if there is a subdrawing of $H$ without edge crossings, whose (abstract) graph is isomorphic to $G$.

Past research has mostly focused on questions where the host graph 
is a geometric complete graph. 
Cabello~\cite{cabello} showed that for a given graph $G$ it is \np-hard to decide whether a geometric complete graph contains a plane $G$.
However, if $G$ is outerplanar, this can be decided in polynomial time; see for example Bose~\cite{bose}.
It is easy to see that each geometric complete graph $H$ contains plane paths and cycles with up to $\lvert V(H)\rvert$ vertices.
This is no longer true for other geometric host graphs.
\Cref{fig:introExamples} shows geometric complete multipartite graphs that do not contain plane cycles of certain lengths.

In this paper we consider geometric complete multipartite graphs as host graphs, with an emphasis on the bipartite case. 
We study whether these host graphs contain a plane cycle of a given length.
We will usually refer to the vertex partition by its associated vertex coloring, where the color classes correspond to the partition classes.
Throughout the paper we depict vertices from the same partition class by a common symbol/color in our figures. Moreover, we omit the edges of the host graph from our figures for a better readability, since they can be easily recovered from the vertex symbols and locations.

\begin{figure}
	\centering
	\includegraphics{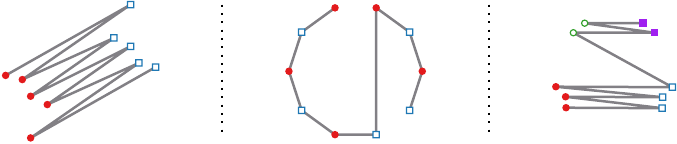}
	\caption{Plane paths in geometric complete multipartite graphs (additional edges of the host graphs are omitted).		
	Each graph contains a plane path on 10 vertices, but no plane 10-cycle.
		In fact, no plane cycle is contained in the graph on the left,  the graph in the middle contains plane cycles of each even length at most $8$, and the graph on the right contains exactly those plane cycles that visit at most one vertex from each color (called rainbow cycles).}
	\label{fig:introExamples}
\end{figure}

There has been much interest in related questions over the years, and collections of many of the obtained results are given in the surveys by Kaneko and Kano~\cite{KK03} and by Kano and Urrutia~\cite{KU21}.
Particular attention has been paid to long plane paths and cycles.
For geometric complete bipartite host graphs, Kaneko and Kano~\cite{KK01} showed that if one color class is sufficiently larger than the other, then any drawing contains a plane path containing all vertices of the smaller color class.
Many known results focus on the special case of geometric complete bipartite host graphs with $n$ vertices of each color where all vertices are in convex position.
Mulzer and Valtr~\cite{MP20} showed that each such host graph contains a plane path of length at least $n+\epsilon n$, for some $\epsilon>0$, improving previous results (cf.~\cite{KPT08,M11}). 
For an upper bound in this case, constructions of such drawings where each plane path has length at most $\frac{4}{3}n+O(\sqrt{n})$ were given by Abellanas et al.~\cite{AGHT03} and Kyn{\v{c}}l, Pach, and T{\'o}th~\cite{KPT08}.
Other special (with respect to vertex placement) geometric graphs were investigated with regard to plane Hamiltonian paths and cycles by Cibulka et al.~\cite{CKMSV09} and Soukup~\cite{Soukup24}. 
For geometric complete multipartite host graphs with an arbitrary number of color classes, Merino, Salazar and Urrutia~\cite{MSU06} gave a lower bound on the length of the longest plane path contained in any such geometric graph and showed it is tight for odd numbers of colors.
Weakening the noncrossing requirement, 1-plane Hamiltonian paths and cycles (i.e.\ paths and cycles of length $2n$) have been studied by Claverol et al.~\cite{COGST18}, and Hamiltonian paths and cycles with the minimal number of crossings have been studied by Kaneko et al.~\cite{KKY00}.

The algorithmic complexity of finding plane paths or cycles is open.
However, Akitaya and Urrutia~\cite{AU90} and Abellanas et al.~\cite{AGHT03} described algorithms that decide whether a geometric complete bipartite host graph with all vertices in convex position has a plane Hamiltonian path in $O(n^2)$ time.
Bandyapadhyay et al.~\cite{DBLP:conf/caldam/BandyapadhyayBB20,BBBN21} gave linear time algorithms to find plane Hamiltonian paths in drawings of complete bipartite graphs, where the vertices are mapped to the real line and the edges are drawn as circular arcs above and below the line.

We point out that a geometric drawing is uniquely determined by the vertex locations in the plane.
Thus, the question whether a given geometric complete multipartite host graph~$H$ contains a plane $G$ is equivalent to the question of whether the vertices of $G$ can be mapped to colored points in the plane, given by $V(H)$, such that the endpoints of each edge in $G$ are mapped to points of different colors and the resulting 
drawing of $G$ is plane.
Both perspectives are used in previous work.
From another perspective, plane cycles can also be interpreted as polygons and there is a significant amount of research on so-called polygonizations (cf.~\cite{DBLP:conf/compgeom/AkitayaKRST19}).

In this paper we study the existence and non-existence of plane cycles of specific lengths in geometric complete multipartite graphs.
In \cref{sec:monotonicity} we consider two extremal classes of such drawings: drawings without any plane cycles and drawings with plane Hamiltonian cycles.
We also investigate a possible \enquote{monotonicity} property of plane cycles, as we think that a  natural candidate for a drawing with a plane cycle of a specific length is a drawing containing a longer plane cycle.
We prove, among others, that if a drawing of a complete multipartite graph contains a plane cycle of length $t$, with $t\geq 6$, then it also contains a shorter plane cycle of length at least $\lceil\frac{t}{2}\rceil + 1$.
In particular, it contains a plane cycle of length four or five.
In \cref{sec:FPT} we give an FPT algorithm to decide whether a geometric bipartite graph $H$ contains a plane Hamiltonian cycle, where the parameter is the number of vertices of $H$ in the interior of the convex hull of all vertices.
In \cref{sec:NP} we consider the complexity of a restricted version of the same problem, where some given edges of the bipartite host graph have to be included in the plane Hamiltonian cycle.
We show that the problem is \np-complete. Note that this problem is known to be \np-complete for (uncolored) complete graphs due to Akiyama et al.~\cite{DBLP:conf/compgeom/AkitayaKRST19} and Jiang, Jiang, and Jiang~\cite{DBLP:journals/corr/abs-2108-12812}.

\section{Monotonicity results for plane cycles}\label{sec:monotonicity}


In this section, we are interested in the following \enquote{monotonicity} question:
Does the existence of a plane cycle of length~$t$ in a geometric complete multipartite graph $H$ imply the existence of a plane cycle of length~$t'$ in $H$ for each $t'$ with $3\le t' \leq t$?
If the given plane cycle is one where all vertices have different colors (called a \lightdfn{rainbow cycle}), then this can be trivially answered with yes.
The question becomes interesting if at least two vertices of the given cycle have the same color; we call a cycle with this property \lightdfn{non-rainbow}. 
As complete bipartite  graphs do not contain rainbow cycles, this is the only sensible case for the case of two colors (for which the question should only be asked for all $t'$ even).
There are geometric complete multipartite graphs that do not contain any plane non-rainbow cycle, including bipartite ones without any plane cycles; 
see \cref{fig:introExamples} (left and right) for examples.
Our first result characterizes those drawings.
The characterization is based on four forbidden configurations~$\mathcal{C}_1$~--~$\mathcal{C}_4$, which are depicted in \cref{fig:forbCyclePatterns}.
For a geometric complete multipartite graph $H$ let $\phi:V(H)\to\mathbb{N}$ denote the associated vertex coloring.

\begin{rainbowconfigurations}
	\item Four distinct vertices $u$, $u'$, $v$, $v'\in V(H)$ and two distinct colors $\phi_1$ and $\phi_2$, with $\phi(u)=\phi(u')=\phi_1$, $\phi(v)=\phi(v')=\phi_2$, such that $v$ and $v'$ lie on different sides of the straight line through $u$ and $u'$. \label{rainbow:bicolored}
	\item Four distinct vertices $u$, $u'$, $v$, $v'\in V(H)$ and three distinct colors $\phi_1$, $\phi_2$, and $\phi_3$, with $\phi(u)=\phi(u')=\phi_1$, $\phi(v)=\phi_2$, and $\phi(v')=\phi_3$, such that $v$ and $v'$ lie on different sides of the straight line through $u$ and $u'$. \label{rainbow:pair-line-through}
	\item Four distinct vertices $u$, $u'$, $v$, $v'\in V(H)$  and three distinct colors $\phi_1$, $\phi_2$, and $\phi_3$, with $\phi(u)=\phi_1$, $\phi(u')=\phi_2$, $\phi(v)=\phi(v')=\phi_3$, such that $v$ and $v'$ lie on different sides of the straight line through $u$ and $u'$. \label{rainbow:pair-separated}
	\item Five distinct vertices $u$, $u'$, $v$, $v'$, $w\in V(H)$  and four distinct colors $\phi_1,\ldots,\phi_4$, with $\phi(u)=\phi(u')=\phi_1$, $\phi(v)=\phi_2$, $\phi(v')=\phi_3$ and $\phi(w)=\phi_4$, such that $u$, $u'$, $v$, and $v'$ form a convex quadrilateral with $w$ in its interior. \label{rainbow:five-point}
\end{rainbowconfigurations}

We say that a geometric graph $H$ \emph{contains} configuration $\mathcal{C}_i$, if there are four or five vertices in $H$ satisfying the constraints described by $\mathcal{C}_i$.

\begin{figure}[tb]
	\centering
	\includegraphics[page=1]{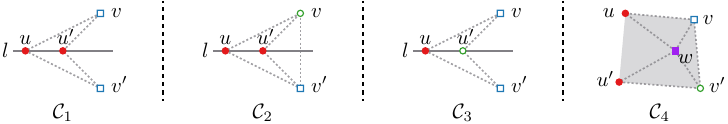}
        \caption{The configurations $\mathcal{C}_1$ -- $\mathcal{C}_4$ which guarantee the existence of non-rainbow plane cycles. $\mathcal{C}_1$ -- $\mathcal{C}_3$: Vertices $v$ and $v'$ lie on different sides of the straight line $l$ through $u$ and $u'$. $\mathcal{C}_4$: Vertices $\{u,u',v,v'\}$ form a convex quadrilateral with $w$ in its interior.}
	\label{fig:forbCyclePatterns}
\end{figure}

\begin{restatable}{theorem}{CharacterizationRainbow}\label{thm:characterization-rainbowcycles}
	A geometric complete multipartite graph $H$ contains a non-rainbow plane cycle if and only if it contains one of the configurations $\mathcal{C}_1$ -- $\mathcal{C}_4$.
	
	There is an algorithm that checks whether $H$ has a non-rainbow plane cycle in time $O(\lvert V(H)\rvert^5)$.
\end{restatable}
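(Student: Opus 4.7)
The plan is to prove the two directions separately, with the converse being the substantive part, and then derive the $O(|V(H)|^5)$ algorithm. For the \emph{if} direction, from each configuration I will exhibit a plane non-rainbow cycle. In $\mathcal{C}_1$, $\mathcal{C}_2$, $\mathcal{C}_3$ the assumption that $v, v'$ lie on opposite sides of the line through $u, u'$ puts the four points in convex position with $u, u'$ and $v, v'$ as the two diagonals, so the boundary 4-cycle $u$--$v$--$u'$--$v'$--$u$ is plane and non-rainbow (either $\{u,u'\}$ or $\{v,v'\}$ is monochromatic). For $\mathcal{C}_4$ I distinguish whether $u, u'$ are opposite or adjacent corners of the convex quadrilateral $uu'vv'$: in the opposite case the same boundary 4-cycle works; in the adjacent case $v, v'$ are also adjacent on the hull and I use the 5-cycle $u$--$v$--$v'$--$u'$--$w$--$u$, whose three hull edges together with the two segments $u'w$ and $wu$ (lying in the open interior of the convex quadrilateral) make it plane.

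For the \emph{only if} direction, let $C$ be a plane non-rainbow cycle in $H$, pick two same-colored vertices $u, u' \in V(C)$ of color $\phi_1$ minimizing the cyclic distance on $C$, and let $\ell$ be the line through them. If some vertex of $V(C)\setminus\{u,u'\}$ lies on each side of $\ell$, pick one $v$ on each side: provided both $\phi(v)$ and $\phi(v')$ differ from $\phi_1$, the four-tuple directly realizes $\mathcal{C}_1$ (if $\phi(v)=\phi(v')$) or $\mathcal{C}_2$ (if $\phi(v)\ne\phi(v')$); and when one of $v, v'$ shares color $\phi_1$ I re-apply the argument to a different same-colored pair to obtain $\mathcal{C}_1$ or $\mathcal{C}_3$. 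Otherwise $V(C)\setminus\{u,u'\}$ lies in a single open half-plane of $\ell$. If some pair of non-$\phi_1$ vertices $v, v' \in V(C)$ has $\ell(v,v')$ separating $u$ from $u'$, the four-tuple $\{u,u',v,v'\}$, with the roles of the two pairs exchanged in the configuration definition, realizes $\mathcal{C}_1$ or $\mathcal{C}_3$. If no such pair exists, I consider the convex hull of $V(C)$: the vertices $u, u'$ are adjacent on the hull, and a careful color analysis of hull and non-hull vertices produces two hull vertices $v, v'$ of distinct non-$\phi_1$ colors together with a further vertex $w$ of a fourth color lying strictly inside $uu'vv'$, yielding $\mathcal{C}_4$.

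For the algorithmic statement, I iterate over all $4$- and $5$-tuples of $V(H)$, of which there are $O(|V(H)|^5)$, and check by a constant number of color comparisons and orientation tests whether each realizes one of $\mathcal{C}_1$--$\mathcal{C}_4$. By the characterization just proved, this correctly decides the existence of a plane non-rainbow cycle in time $O(|V(H)|^5)$.

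The main obstacle I foresee is the last sub-case of the \emph{only if} direction, in which neither $\ell(u,u')$ nor any line through two non-$\phi_1$ vertices of $V(C)$ separates the corresponding other pair. Producing the five-point, four-color configuration $\mathcal{C}_4$ here requires showing that the non-$\phi_1$ vertices of $C$ span three additional colors at suitably placed points on and inside the convex hull; this is precisely the delicate scenario that forces $\mathcal{C}_4$ into the characterization, whereas all easier situations are captured by $\mathcal{C}_1$--$\mathcal{C}_3$.
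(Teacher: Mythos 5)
There is a genuine gap: the hardest part of the ``only if'' direction --- the sub-case that actually forces $\mathcal{C}_4$ into the characterization --- is not proved. You set it up (``a careful color analysis of hull and non-hull vertices produces \dots\ $\mathcal{C}_4$'') and then explicitly flag it as the main obstacle, but this is precisely the content of the theorem; without it the characterization could be false as stated. A second, related gap is the step ``when one of $v,v'$ shares color $\phi_1$ I re-apply the argument to a different same-colored pair'': you give no progress measure and no reason why iterating over monochromatic pairs terminates in a configuration rather than cycling through pairs whose separating lines always have a $\phi_1$-vertex on one side. (A smaller flaw: opposite-side separation of $v,v'$ by the line $uu'$ does \emph{not} put the four points in convex position with $uu'$, $vv'$ as diagonals --- e.g.\ $u'$ can lie inside the triangle $uvv'$; the $4$-cycle $u v u' v'$ is still plane, but for the reason that each of its edges lies in the closed half-plane of its non-$\phi_1$ endpoint and meets the line only at $u$ or $u'$.)

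For comparison, the paper closes exactly these holes with a different decomposition. From the absence of $\mathcal{C}_1$ and $\mathcal{C}_2$ it first deduces a global geometric fact: no vertex of another color lies in the convex hull of the color class $\phi_1$, so a line $\ell$ separates all $\phi_1$-vertices from everything else. It then takes the edge $uv$ of the cycle $C$ whose crossing with $\ell$ is leftmost ($u$ of color $\phi_1$), lets $u'$ be the next $\phi_1$-vertex along $C$ from $u$ through $v$, and observes that planarity forces the \emph{other} neighbor $w$ of $u$ to lie inside the convex hull of the $u$--$u'$ subpath through $v$. Hence $w$ lies inside a triangle or convex quadrilateral spanned by $u$, $u'$ and one or two consecutive vertices $X$ of that subpath, and a short color case analysis on $\{u,u',w\}\cup X$ yields one of $\mathcal{C}_1$--$\mathcal{C}_4$ (this is where $\mathcal{C}_4$ appears, when $|X|=2$ with two further distinct colors). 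If you want to salvage your outline, you need an analogue of this trapping argument; reasoning only about which vertices lie on which side of $\ell(u,u')$, as in your cases (a) and (b), does not by itself produce the interior fifth point that $\mathcal{C}_4$ requires. Your ``if'' direction and the $O(|V(H)|^5)$ enumeration algorithm are fine and match the paper.
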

\begin{proof}
	A geometric complete multipartite graph which contains one of the configurations \cref{rainbow:bicolored,rainbow:pair-line-through,rainbow:pair-separated,rainbow:five-point} has a non-rainbow plane cycle, as each of configurations admits such a cycle.
	
	So consider a geometric complete multipartite graph $H$ which contains none of the configurations \cref{rainbow:bicolored,rainbow:pair-line-through,rainbow:pair-separated,rainbow:five-point}.
	For the sake of a contradiction suppose that $H$ has a non-rainbow plane cycle~$C$, that is, $C$ contains two vertices of the same color~$\phi_1$.
	The absence of configurations \cref{rainbow:bicolored} and \cref{rainbow:pair-line-through} implies that within the convex hull of the vertices of color $\phi_1$, there is no vertex of a different color.
	This shows that there is a line $\ell$ separating all vertices of color $\phi_1$ from all other vertices of $V(H)$.
	Without loss of generality assume that $\ell$ is horizontal; see \cref{fig:planeCycleMulticolor}.
	Let $uv$ denote the edge in $C$ whose intersection with $\ell$ is to the left of all other intersections between $C$ and $\ell$, with $u$ having color $\phi_1$.
	Such an edge exists as the endpoints of each edge in $C$ have distinct colors.
	Further let $u'$ be the vertex of color~$\phi_1$ in~$C$ that is reached first when traversing $C$ from $u$ in direction of $v$.
	Then $u\neq u'$, since $C$ contains at least two vertices of color $\phi_1$.
	Let $C_1$ denote the corresponding $u$-$u'$-path in $C$ (containing $v$) and let $C_2$ denote the other $u$-$u'$ path in $C$.
	Then the only vertices of color $\phi_1$ in $C_1$ are $u$ and $u'$, and $C_1$ crosses $\ell$ only with the edges incident to these vertices, while $C_2$ might contain more vertices of color $\phi_1$ and more crossings with $\ell$.
	
	Consider the neighbor $w$ of $u$ in $C_2$.
	Then $w\neq u'$ and $w$ is not of color $\phi_1$.
	As the intersection of $uv$ is to the left all other intersections between $C$ and $\ell$ and because $C$ is plane, the vertex $w$ is in the interior of the convex hull of the vertices in $C_1$.
	So there is a set $X$ of one vertex or two adjacent vertices from $C_1$ such that $w$ is in the interior of the convex hull spanned by $\{u,u'\}\cup X$.
	We distinguish few cases to see that we find one of the configurations \cref{rainbow:bicolored,rainbow:pair-line-through,rainbow:pair-separated,rainbow:five-point} among them.	
	If $w$ and some vertex in $X$ have the same color, then we find \cref{rainbow:bicolored} or \cref{rainbow:pair-line-through}.
	Otherwise, we find \cref{rainbow:five-point} (if $|X|=2$ and the vertices in $X$ have different colors), or \cref{rainbow:pair-separated} (if the vertices in $X$ have the same color or $|X|=1$).
	This is a contradiction and, hence, $H$ does not have any non-rainbow plane cycle.
	
	\begin{figure}
		\centering
		\includegraphics{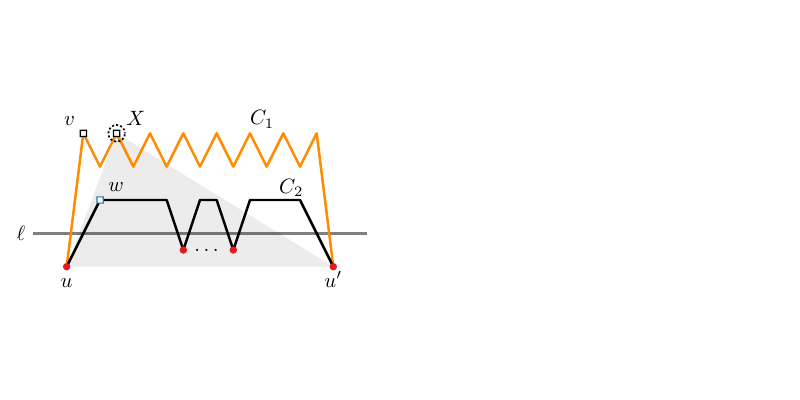}
		\caption{With $X = \{x\}$, the set $\{u, u', w, x\}$ admits \cref{rainbow:bicolored} or \cref{rainbow:pair-separated}, depending on the colors of $x$ and~$w$.}
		\label{fig:planeCycleMulticolor}
	\end{figure}
	
	As each forbidden configuration has only four or five vertices, this also yields an algorithm to check for the existence of non-rainbow plane cycles with running time in~$O(n^5)$.\qed
\end{proof}
	
%

As \cref{rainbow:bicolored} is the only forbidden configuration with only two colors, it follows that  any geometric complete bipartite graph $H$ contains a plane cycle if and only if it contains a plane cycle of length $4$.
With the help of \cref{thm:characterization-rainbowcycles} we can further easily check that the geometric graphs in \cref{fig:introExamples} (left and right) do not have non-rainbow plane cycles.
Another consequence of \cref{thm:characterization-rainbowcycles} is that any geometric complete multipartite graph that has some non-rainbow plane cycle also has such a cycle of length~4 or~5.
\Cref{thm:half} below gives a partial answer to the monotonicity question.

To prove this theorem, we first introduce some notation.
It is well-known that every plane geometric cycle $C$ can be augmented to a triangulation $T$; that is a plane graph
whose interior faces are triangles. In this case we say $T$ is a triangulation of $C$.
We call an interior edge $ab$ in a triangulation \lightdfn{flippable}, if its incident triangles $abc$ and $abd$ form a convex quadrilateral.
A \lightdfn{flip} of a flippable edge $ab$, with incident triangles $abc$ and $abd$, replaces $ab$ by $cd$.
This leads to a different triangulation.
Note that any two triangulations on the same point set can be transformed into each other by a sequence of flips~\cite{flipGraphConnected}.

\begin{restatable}{theorem}{ThmHalf}\label{thm:half}
	Let $H$ be a geometric complete multipartite graph which contains a non-rainbow plane cycle $C$ of length $t$, where $t \geq 6$.
	Then $H$ contains a shorter non-rainbow plane cycle of length at least $\lceil\frac{t}{2}\rceil+1$.
\end{restatable}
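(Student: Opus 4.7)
The plan is to reduce the theorem to the existence of a single ``good'' chord of $C$. Let $P_C$ denote the simple polygon bounded by $C$. Any chord $uv$ of $P_C$ with $\phi(u)\neq\phi(v)$ is an edge of $H$, and together with either arc of $C$ between $u$ and $v$ forms a plane sub-cycle of $H$ (the chord lies inside $P_C$, so no new crossings are introduced). The two resulting sub-cycles have total length $t+2$, so the longer one has length at least $\lceil(t+2)/2\rceil=\lceil t/2\rceil+1$ and at most $t-1$. The theorem therefore reduces to producing a chord with differently colored endpoints whose longer sub-cycle is non-rainbow.

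To secure non-rainbowness, I would fix two vertices $x,x'\in V(C)$ with $\phi(x)=\phi(x')=\phi_1$ at minimal cyclic distance $k$ on $C$; because consecutive $C$-vertices have different colors we have $k\geq 2$, and by symmetry we may assume $k\leq\lfloor t/2\rfloor$. Let $B$ be the long $x$-$x'$-arc of $C$, of length $t-k\geq\lceil t/2\rceil$. If the longer sub-cycle of the chord contains both $x$ and $x'$, then it is automatically non-rainbow. So I aim for a chord $v_iv_j$ with both endpoints strictly interior to $B$ and cyclic distance $|j-i|\leq\lfloor t/2\rfloor$: the other side of the chord then contains both $x$ and $x'$, has length $t+1-|j-i|\geq\lceil t/2\rceil+1$, and is non-rainbow.

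The heart of the proof is exhibiting such a chord. I would triangulate $P_C$ and inspect diagonals with both endpoints in $B\setminus\{x,x'\}$. Since $C$-edges are bichromatic, consecutive vertices of $B$ have different colors, and a short counting argument on the triangles of the triangulation (in the spirit of the bipartite computation, which gives roughly $(t-4)/2$ bichromatic diagonals) guarantees the existence of at least one bichromatic diagonal in $B$. A centroid-style choice in the dual tree of the triangulation then bounds its cyclic span by $\lfloor t/2\rfloor$; if the initial triangulation does not directly expose such a diagonal, the connectivity of the flip graph allows flipping to one that does.

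The main obstacle is the boundary case where $k$ is close to $\lfloor t/2\rfloor$, so that $B$ has few interior vertices and admits no interior diagonal of sufficiently small span, often compounded by a non-convex $P_C$ in which natural chord candidates fall outside the polygon. In such cases the fallback is to use a diagonal of the triangulation incident to $x$ or $x'$: the fan of triangles at $x$ meets some vertex $w\in B$ with $\phi(w)\neq\phi_1$ (guaranteed because the $C$-neighbors of $x$ are not colored $\phi_1$, and flips can be used to move $w$ to cyclic distance at least $\lceil t/2\rceil$ from $x$ along $B$), and the chord $xw$ creates a sub-cycle of the required length that still contains $x$ and $x'$. Verifying these boundary configurations and confirming that the minimality of $k$ forces cooperative color patterns in each case is expected to be the bulk of the technical work.
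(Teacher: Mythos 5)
The reduction in your first paragraph is fine as far as it goes, but the argument collapses at the step you call its heart: the claim that a triangulation of $P_C$ must contain a bichromatic diagonal is false in general. A triangle with exactly one boundary edge does force one of its two diagonals to be bichromatic, but a triangulation need not contain such a triangle: already for $t=6$ one can triangulate a hexagon into three ears around a central triangle, and if the three vertices of that central triangle share a color (with the remaining three colored so that every boundary edge is bichromatic), then every diagonal is monochromatic. The paper exhibits cycles (\cref{fig:remark-flip}, left) for which, in every triangulation, only the edges of $C$ lie in $H$, so your approach has no chord to work with there. The paper devotes its entire second case to exactly this situation and resolves it by a different mechanism: flip a triangulation of $C$ towards one containing some edge of $H$ not in $C$, stop at the first flip of a cycle edge $ab$, and show that the new edge $cd$ is bichromatic and replaces the three-edge path $ca$, $ab$, $bd$ of $C$, shortening the cycle by two. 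That shortcut crosses $C$, so it is invisible to any argument that only inspects diagonals of the polygon.

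Even in the favorable case where a bichromatic noncrossing chord exists, your plan for forcing non-rainbowness is not actually carried out: the \enquote{centroid-style} selection, the use of flips to relocate $w$, and the boundary cases with $k$ near $\lfloor t/2\rfloor$ are all deferred as expected technical work, and that is precisely where the difficulty sits (note also that a noncrossing chord of $C$ in $H$ may lie in the exterior of $P_C$, a case the paper treats separately and you do not consider). The paper's route here is different and complete: assuming every \enquote{principal point} of $C$ is bad, it deduces from a hypothetical rainbow sub-cycle that $C$ would have exactly three principal points forming a path, and then invokes a geometric lemma showing such a polygon has at most five vertices, contradicting $t\geq 6$. As it stands, your proposal establishes the theorem only under the unstated (and false) hypothesis that $C$ admits a bichromatic interior chord.
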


\begin{proof}
		Without loss of generality we assume that $V(H)=V(C)$. 
	A \lightdfn{principal point} of $C$ is a vertex $p_1$ in $C$ such that for its neighbors $p_2$ and $p_3$ in $C$ the straight-line segment between $p_2$ and $p_3$ does not cross $C$.
	If $p_2p_3$ is an edge in $H$, removing $p_1$ and adding $p_2p_3$ yields a plane cycle $C_p$ of length $t-1\geq \lceil\frac{t}{2}\rceil+1$ (as $t\geq 6$). 
	If $C_p$ is non-rainbow, we call $p$ \lightdfn{good} and we are done.
	We call $p$ \lightdfn{bad} if it is not good, which means either $p_2p_3$ is not an edge in $H$ or it is in $H$ and the corresponding shorter cycle $C_p$ is rainbow.  
	So for the rest of the proof we assume that each principal point is bad.
	We distinguish two cases.
	\begin{proofcases}
		\item\label[proofcase]{case-2:bichromatic-shortcut}
		\caseheading{There exist an edge $uv$ in $H$ but not in $C$ that does not cross any edge of $C$. ($H$ contains a noncrossing chord of $C$.)}
		
		Let  $\pi_1$ and $\pi_2$ denote the two paths in $C$ that connect $u$ and $v$.
		Let $n_1$ be the number of edges in $\pi_1$ and let $n_2$ be the number of edges in $\pi_2$.
		Then $n_1 + n_2 = t$ holds.
		Without loss of generality, assume that $n_1 \geq n_2$.
		Thus, the cycle $C'$ formed by $\pi_1$ together with $uv$ is plane and contained in $H$.
		Moreover, its length is less than $t$ and at least $\lceil\frac{t}{2}\rceil+1$. It remains to prove that $C'$ is non-rainbow.
		
		First assume that $uv$ is in the interior of the region bounded by $C$.
		For the sake of a contradiction assume that $C'$ is rainbow.
		Then all vertices in $\pi_1$ are of different colors.
		Since $uv$ is in the interior of $C$, there is at least one principal point $u'$ of $C$ which is in $C'$.
		As the neighbors of $u'$ are of different colors, the edge between them is in $H$ and thus the corresponding shorter cycle $C_{u'}$ is in $H$.
		Because $C$ is non-rainbow but $C_{u'}$ is rainbow (as $u'$ is bad by assumption), we conclude that $C$ contains exactly two vertices of the same color ($u'$, and some other $v' \in V(C_{u'})$), while all other vertices are of different colors.
		
		Because each principal point $w$ is bad, we either have $w\in\{u',v'\}$ or the neighbors of $w$ in $C$ are exactly $u'$ and $v'$.
		Altogether, we have at most three principal points.
		Hence, $C$ does not form a convex polygon (as $t\geq 6$ and each point of a convex polygon is a principal point).
		As each polygon has at least three principal points, $C$ has exactly three principal points: $u'$, $v'$ and a principal point $w$ whose neighbors in $C$ are $u'$ and $v'$.
		In particular, the three principal points form a path in $C$.
		\Cref{lem:anthropomorphicPolygon}, stated in \cref{app:monotonicity}, shows that such a geometric cycle contains at most five points, a contradiction.
		Therefore, $C'$ is non-rainbow.
			
		Now assume that $uv$ is in the exterior of the region bounded by $C$.
		Without loss of generality, assume that there are no edges of $H$ that are in the interior of $C$ and do not cross $C$, or, equivalently,
		each noncrossing chord of $C$ in the interior has both endpoints of the same color.
		Consider a triangulation $T$ of $C$, and let $vw$, $ww'$ and $w'w''$ denote the first three edges in $\pi_1$ (which has at least $\lceil\frac{t}{2}\rceil\geq 3$ edges).
		Let $z$ denote the unique vertex in $C$ such that $w$, $w'$, and $z$ form a triangle in $T$ which is in the interior of $C$.
		We remark that the edges of $T$ might differ from the edges of $H$. 
		In particular, $wz$ and $w'z$ are edges of $T$ but (by assumption) can only be in $H$ if they are also in~$C$.
		Since $w$ and $w'$ are of different colors, at least one of $zw$ or $zw'$ is in $H$ and thus~$C$.
		Hence, $z$ is one of $v$ or $w''$.
		As further $t\geq 6$, at least one of $\{zw, zw'\}$ is not in $C$ and thus not in~$H$.
		Consequently, either (if $z=v$) the vertices $v$ and $w'$ have the same color or (if $z=w''$) the vertices $w$ and $w''$ have the same color.
		Hence $C'$ is non-rainbow.

		\item\label[proofcase]{case-2:flip}
		\caseheading{Every edge in $H$ but not in $C$ crosses an edge of $C$.  (Each noncrossing chord of $C$ is not in $H$.)}
		
		Let $T$ be a triangulation of $C$. 
		By assumption, the only edges of the triangulation that are also in $H$ are the edges of $C$.
		Let $e$ be an edge in $H$ and not in $C$ and let $T'$ be an arbitrary triangulation of the vertices of $C$ that contains the edge~$e$.
		Such a triangulation exists since $\lvert V(C)\rvert = t \geq 6$.
		Consider the flip sequence $\mathcal{S}$ that flips $T$ to~$T'$.
		We carry out the flips of $\mathcal{S}$ one after the other and stop the first time when an edge $ab$ from $C$ is flipped.
		Let $cd$ denote the resulting edge of that flip.
		Let $T''$ denote the triangulation before the flip.
		
		\begin{figure}
			\centering
			\includegraphics[page=2]{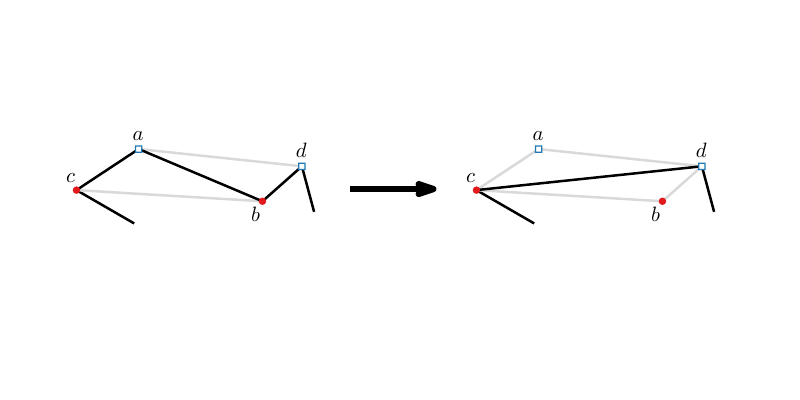}
			\caption{Using a flip to shorten a plane cycle (black edges) by two edges.}
			\label{fig:flip}
		\end{figure}
		
		Because we did not stop earlier, $T''$ is still a triangulation of $C$.
		Hence, each edge of $T''$ that is in $H$ is also in $C$ (by the assumptions of this case).
		In particular, each edge from $\{ac, ad, bc, bd\}$ that is in $H$ is also in $C$.
		Therefore, since $a$ and $b$ have different colors (as $C$ is in $H$), at least one edge from $\{ac, bc\}$ as well as at least one edge from $\{ad, bd\}$ is in $C$.
		In turn, at least one edge from $\{ac, bc\}$ as well as at least one edge from $\{ad, bd\}$ is not in $C$, since $C$ is a cycle and of length at least $6$.
		We may assume, without loss of generality, that $ac$ and $bd$ are in $H$ and the sequence $ca$, $ab$, $bd$ forms a path on $C$.
		Since $ab$ is flippable, the vertices $a$, $c$, $b$, $d$ form a convex quadrilateral.
		Moreover, there are no vertices of $C$ in the interior of this quadrilateral.
		In particular, this also shows that $c$ and $d$ have different colors (one has the same color as $a$ and the other has the same color as~$b$), because the noncrossing chords $ad$ and $bc$ are not in $H$ by the assumptions of this case.
		By replacing the sequence $ca$, $ab$, $bd$ by the edge $cd$ (see \cref{fig:flip}), we therefore obtain a plane cycle $C'$ in $H$ which is two edges shorter than $C$ (but still at least $\lceil\frac{t}{2}\rceil + 1$ edges long since $t \geq 6$).

		We claim that $C'$ is non-rainbow.
		For the sake of a contradiction, assume that $C'$ is rainbow.
		Then all vertices in $C'$ are of different colors.
		So $a$ and $b$ are the only vertices in $C$, whose neighbors in $C$ are of the same color, i.e., $C$ contains two colors which each occur twice.
		$C$ is nonconvex (because of the quadrilateral $a$, $c$, $b$, $d$) and contains at least three principle points, therefore there is a third principle point $z$ in $C$ besides $a$ and~$b$.
		Since $z$ is also part of $C'$, which is rainbow, the neighbors of $z$ have different colors; this is also true if $z = c$ or $z = d$.
		Furthermore, $C_z$ contains $a$ and $b$ and at least one of $c$ or $d$, i.e. $C_z$ is non-rainbow.
		Thus $z$ is a good principal point of $C$, a contradiction.
		Therefore,~$C'$ is non-rainbow.
		\qed
	\end{proofcases}
\end{proof}

\begin{figure}
	\centering
	\includegraphics[page=3]{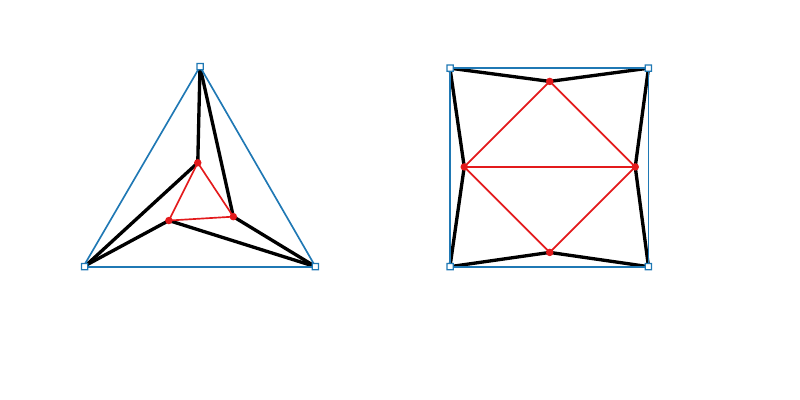}
	\caption{
		Triangulations $T$ of some plane cycles $C$ (black edges, thick).
		\subcaptionheading{Left:} For any $T$ only the edges of $C$ are in $H$. 
		\subcaptionheading{Right:} For any $T$ no edge of $C$ is flippable.
	}
	\label{fig:remark-flip}
\end{figure}

We remark that the two cases in the proof of \cref{thm:half} are necessary; see \cref{fig:remark-flip}.
In particular, it is possible that in every triangulation of a plane cycle~$C$, only the edges of $C$ are in~$H$.
It is also possible that for every triangulation of~$C$, no edge of $C$ is flippable.
Further note that it is not always possible to shorten an existing plane cycle by replacing a path with three edges with a single edge, as it is done in the second case of the proof of \cref{thm:half}:
an example with a plane 22-cycle is given in \Cref{fig:no-c4-chord} (the underlying geometric graph admits \emph{other} plane 20-cycles).

\begin{figure}[h]
	\centering
	\includegraphics[width=\textwidth]{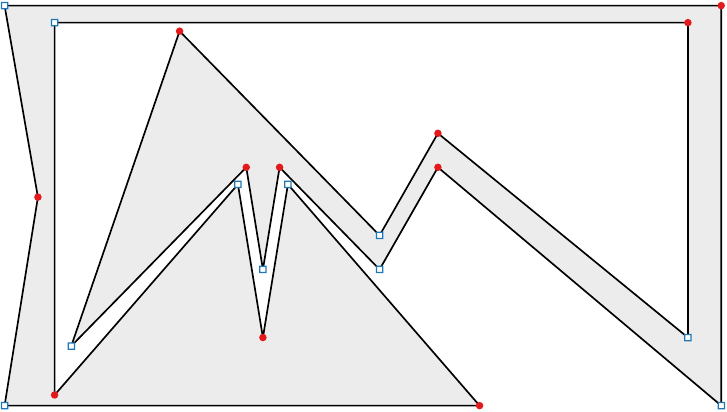}
	\caption{
		A plane cycle $C$ of length~22 with no chord $c$, such that $c$ and a subpath of $C$ forms a plane cycle of length~20.
	}
	\label{fig:no-c4-chord}
\end{figure}

We conclude this section with a monotonicity result for a restricted type of drawing.
In the following, we associate the two partition classes of a complete bipartite graph with two color classes, which we refer to as \colonename and \coltwoname, and color the vertices accordingly.
Our result
generalizes the following result of Soukup (rephrased in terms of drawings):
If a geometric complete bipartite graph $H$ with $n$ \colonename vertices and $n$ \coltwoname vertices has a set of \coltwoname vertices that contains  in the interior of its convex hull all \colonename vertices but no \coltwoname vertex, then $H$ contains a plane Hamiltonian cycle~\cite[Remark after Theorem~1]{Soukup24}.

\begin{restatable}{theorem}{ThmNestedColor}\label{thm:nestedColor}
	Let $H$ be a geometric complete bipartite graph with $n$ vertices of each color.
	If there exists a set of \coltwoname vertices that contains  in the interior of its convex hull all \colonename vertices but no \coltwoname vertex, then for each $t\in \{2,\ldots, n\}$, $H$ contains a plane cycle of length~$2t$.
\end{restatable}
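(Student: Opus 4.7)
Let $k := |S|$. The hypothesis implies that $S$ is in convex position, every red lies in the interior of $\mathrm{conv}(S)$, and every blue in $B \setminus S$ lies outside $\mathrm{conv}(S)$. I plan to construct a plane $2t$-cycle in $H$ by case analysis on $t$ relative to $k$, relying on the cited result of Soukup (stated in the remark preceding the theorem) whenever possible.

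For $t = 2$, any two reds lie in the interior of $\mathrm{conv}(S)$, so the line through them separates $\mathrm{conv}(S)$ into two nonempty parts, each containing a blue of $S$; picking such blues yields configuration $\mathcal{C}_1$ and hence a plane $4$-cycle via \cref{thm:characterization-rainbowcycles}. For $t \geq k$, I would apply Soukup's result to the bipartite subgraph $H'$ of $H$ induced on any $t$ reds and $t$ blues that include all of $S$; the $t - k$ extra blues are drawn from $B \setminus S$ (which contains $n - k \geq t - k$ elements). The set $S$ itself witnesses Soukup's hypothesis for $H'$, so $H'$ has a plane Hamiltonian cycle of length $2t$, which is a plane $2t$-cycle in $H$.

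For the main case $3 \leq t < k$, reducing to Soukup on a subgraph is in general not enough: for instance, when $n = k$ and each red lies close to a different boundary edge of $\mathrm{conv}(S)$, no $t$-vertex subset of $S$ has $t$ reds in its convex hull. Instead, I would construct the cycle explicitly using a ``zigzag-plus-chord'' pattern. After relabelling, pick $t - 1$ consecutive vertices $s_1, \ldots, s_{t-1}$ of $S$ on $\mathrm{conv}(S)$, a blue $s_f \in S$ on the opposite side of $\mathrm{conv}(S)$, and $t$ reds $r_1, \ldots, r_t$ with $r_j$ placed in the thin strip between the boundary arc $s_0 s_1 \cdots s_t$ (where $s_0, s_t \in S$ flank the arc) and the chord $s_0 s_t$, near the edge $s_{j-1} s_j$. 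The cycle
\[
C := r_1 \, s_1 \, r_2 \, s_2 \, \cdots \, r_{t-1} \, s_{t-1} \, r_t \, s_f \, r_1
\]
has $2t$ vertices and alternates reds and blues. Its zigzag subpath stays inside the strip and is plane, while the closing chords $r_t s_f$ and $s_f r_1$ lie on the opposite side of the chord $s_0 s_t$ and hence do not cross the zigzag.

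The main obstacle will be to guarantee the existence of an arc and reds with the required placement: each of the $t$ edge-regions of the arc must contain at least one red in the correct angular position. A pigeonhole-style averaging over the $k$ rotations of the arc shows that some rotation has $nt/k \geq t$ reds in its strip (each red lies in the strips of exactly $t$ rotations), and I expect to handle more uneven red distributions by a careful case analysis---possibly by allowing non-consecutive blues on the arc, or by falling back to the Soukup-on-subgraph approach when the reds form a small cluster inside some sub-polygon of $S$. Once the arc and reds are in place, verifying planarity of the two closing chords is a straightforward check using that $s_f$ lies outside the strip.
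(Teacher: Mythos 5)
Your reduction for $t\geq k$ (keep all of $S$ plus $t-k$ outside blues and any $t$ reds, then invoke Soukup) is exactly the paper's first step, and your $t=2$ case is fine. The problem is the main case $3\leq t<k$, where your argument has a genuine gap that you yourself flag but do not close. First, the averaging claim \enquote{each red lies in the strips of exactly $t$ rotations} is false: a red point $r$ lies in the strip of the rotation starting at $s_i$ if and only if $r$ is on the arc side of the chord $s_is_{i+t}$, and a red sitting near the centroid of $\mathrm{conv}(S)$ lies on the arc side of \emph{none} of these chords when $t<k/2$. So the total count over all rotations can be far below $nt$, and no rotation need have $t$ reds in its strip. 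Second, even when some strip does contain $t$ reds, your zigzag needs one red \emph{per edge-region} $s_{j-1}s_j$ of the arc in the correct angular position; $t$ reds clustered near a single edge do not yield a plane alternating zigzag. The fallback you mention (\enquote{Soukup on a subgraph when the reds cluster}) is precisely the configuration where no subset of $S$ separates the reds, so it is the hard case, not an escape route. As written, the proposal proves the theorem only for $t=2$ and $t\geq k$.

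For contrast, the paper avoids any assumption on how the reds are distributed. It works with the \emph{boundary triangles} $t_i=\mathrm{conv}(b_{i-1},b_i,b_{i+1})$: if some $t_i$ contains at most one red, it deletes $b_i$ together with one red and recurses (shrinking both color classes by one while preserving the separation hypothesis); otherwise a counting argument on the bipartite incidence graph between triangles and reds forces every triangle to contain exactly two reds and every red to lie in an edge-zone, which pins the drawing down to one of two canonical configurations in which the $2k$-cycle and all shorter even cycles can be exhibited explicitly. If you want to salvage your construction, you would need to replace the pigeonhole step by something of this structural flavor; the recursion on a nearly-empty boundary triangle is the key idea your proposal is missing.
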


\begin{proof}
	Let $B$ be the set of \coltwoname vertices that contains  in the interior of its convex hull all \colonename vertices but no \coltwoname vertex.
	Consider some $t\in \{2,\ldots, n\}$.
	Let $I$ denote the set of \colonename vertices.
	We can assume that $B$ contains all \coltwoname vertices.
	If this is not the case, we can repeatedly remove one \coltwoname vertex not in $B$ and one \colonename vertex in $I$ until either exactly $t$ vertices in $I$ remain or we run out of \coltwoname vertices not in $B$.
	In the former case, also exactly $t$ \coltwoname vertices remain and we apply Soukup's result to find a cycle of length~$2t$.
	In the latter case, we use the following strategy.
	
	Let $b_1, \ldots, b_k$ be the vertices in $B$ in counterclockwise order along the convex hull of $B$ ($k > t$).
	We call the interior of the triangle spanned by $b_{i - 1}, b_i, b_{i + 1}$ the \lightdfn{boundary triangle~$t_i$}.
	Here we use $b_{k + 1} = b_1$ and $b_0 = b_k$, and we also set $t_{k + 1} = t_1$ and $t_{0} = t_k$.
	Note that among the boundary triangles, $t_i$ has a nonempty intersection only with $t_{i - 1}$ and $t_{i + 1}$.
	We call $t_{i} \cap t_{i - 1}$ the \lightdfn{edge-zone}~$z_i$.
	Note that all edge-zones are disjoint.	
	We distinguish two cases:
	
	\begin{proofcases}
		\item\label{case:almost-empty-ear}
		\caseheading{There exists a boundary triangle $t_i$ containing at most one vertex $p$ from $I$.}
		
		Let $B' = B \setminus \{b_i\}$.
		If there is a vertex $p \in t_i$, then we set $p'=p$, otherwise we pick an arbitrary vertex from $I$ as $p'$.
		Next, we set $I' = I \setminus \{p'\}$.
		Because $t_i$ contains no vertices from $I'$, $I'$ lies in the interior of $B'$.
		The resulting set $P' = B' \cup I'$ has size $2k - 2$.
		We can therefore recurse on the geometric complete multipartite graph with vertex set $P'$ until we get to $t$ vertices on the convex hull and apply Soukup's result, or until we end in \cref{case:no-almost-empty-ears}.
		
		\item\label{case:no-almost-empty-ears}
		\caseheading{Every triangle $t_i$ contains two or more vertices from $I$.}
		
		We first argue that in this case every boundary triangle $t_i$ contains exactly two vertices from $I$.
		Let $G$ be a bipartite graph whose vertices represent the boundary triangles $T = \{t_1, \ldots, t_k\}$ as one partition class and the vertices in $I = \{p_1, \ldots, p_k\}$ as the other partition class.
		Whenever a triangle $t_i$ contains a vertex $p_j$, we add the edge $t_ip_j$ to $G$.
		On the one hand, because the triangle $t_i$ has nonempty intersection only with two other triangles, no vertex $p_j$ from $I$ can be in more than two triangles of $T$, so the degree of $p_j$ is at most two and $G$ has at most $2k$ edges.
		On the other hand, by our case assumption, every $t_i$ contains at least two vertices, so the degree of $t_i$ is at least two and $G$ has at least $2k$ edges.
		Thus $G$ has exactly $2k$ edges.
		As a consequence, every vertex $t_i$ or $p_j$ has degree exactly two, and thus every triangle $t_i$ contains exactly two vertices and every vertex $p_j$ from $I$ lies in two triangles simultaneously.
		In particular, each vertex from $I$ lies in an edge-zone.
		
		Given these insights about how the drawing can look locally, we now show that there are only two possible drawings.
		As a first case, assume that there is an edge-zone $z_i$ containing no vertex from $I$.
		Then the two vertices that need to lie in $t_i$ have to lie in $z_{i + 1}$.
		The two vertices in $t_{i + 1}$ are now already in $z_{i + 1}$, and thus $z_{i + 2}$ has to be empty.
		We can repeat this argument until we arrive at $t_{i - 1}$.
		If $k$ is odd, we get a contradiction, since $z_{i - 1}$ would then contain no vertex from $I$ and so would $t_{i - 1}$.
		However, if $k$ is even, we get a feasible drawing, which we call~$\mathcal{C}_1$.
		As a second case, assume that every edge-zone contains at least one vertex from $I$.
		As the edge-zones are disjoint, we have that each edge-zone contains exactly one vertex from $I$.
		This gives another valid drawing, which we call $\mathcal{C}_2$.
		Both drawings are depicted in \cref{fig:nestedColor-second-case-configurations}.
		
		\begin{figure}
			\centering
			\includegraphics[width=\textwidth]{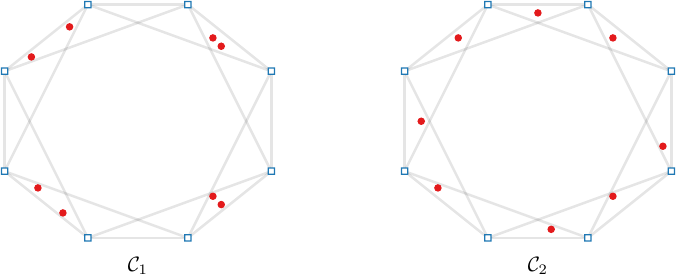}
			\caption[The two possible drawings if every triangle contains two or more vertices]
			{The two possible drawings in the proof of \cref{thm:nestedColor}, \cref{case:no-almost-empty-ears}.}
			\label{fig:nestedColor-second-case-configurations}
		\end{figure}

		We can now finally show the existence of a plane cycle of length $2t$, by deriving it from a plane cycle $C_{2k}$ of length $2k$.
		For drawing $\mathcal{C}_1$, a possible $C_{2k}$ is given by connecting the vertices as follows.
		For every edge-zone $z_i$ that contains two vertices $p_1, p_2 \in I$, assume without loss of generality that $b_{i - 1}$ sees the vertices $b_i$, $p_1$, $p_2$ and $b_{i - 2}$ in counterclockwise order.
		We connect both $b_{i - 1}$ and $b_i$ to $p_1$ and both $b_{i - 1}$ and $b_{i - 2}$ to $p_2$ to get $C_{2k}$; see \cref{fig:aphc-in-second-configuration}.
		A plane cycle $C_{2k - 2}$ of length $2k - 2$ can be obtained from $C_{2k}$ by picking a nonempty edge-zone $z_i$ and skipping the vertex $b_{i - 1}$ and one vertex in $z_i$.
		For smaller cycle lengths, we prune $\mathcal{C}_1$ by repeatedly removing the vertices in some nonempty $z_i$ together with $b_i$ and $b_{i - 1}$ until we reach a drawing $\mathcal{C}'_1$ of the same type as $\mathcal{C}_1$, with $2t$ or $2t + 2$ vertices.
		We can then find the desired cycle with the strategy that defined $C_{2k}$ or $C_{2k - 2}$ above.
		
		\begin{figure}
			\centering
			\includegraphics{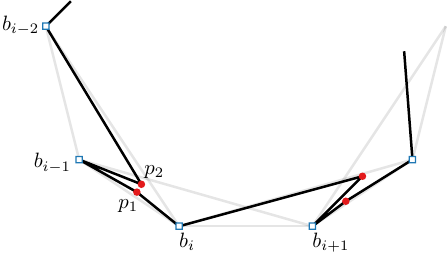}
			\caption[A plane cycle of length 2k in case there are empty edge-zones]
			{A plane cycle of length $2k$ (black) in the drawing $\mathcal{C}_1$.}
			\label{fig:aphc-in-second-configuration}
		\end{figure}
		
		For $\mathcal{C}_2$, a possible $C_{2k}$ is given by connecting each vertex $b_i$ to the vertices within the boundary triangle $t_i$.
		For smaller cycles we can again prune $\mathcal{C}_2$, this time by simply taking $2t$ consecutive vertices in $C_{2k}$.
		By doing this, we introduce only one new edge, which lies in the interior of $C_{2k}$ and thus does not cross $C_{2k}$.\qed
	\end{proofcases}
\end{proof}

%


\section{An FPT Algorithm with respect to the number of interior vertices}\label{sec:FPT}

In this section we consider geometric complete bipartite host graphs $H$ with $n$ \colonename vertices and $n$ \coltwoname vertices.
The vertices of $H$ that lie in the interior of the convex hull of $V(H)$ are called the \lightdfn{interior vertices} and the remaining vertices are called \lightdfn{boundary vertices}.
We describe an algorithm that checks whether $H$ contains a plane Hamiltonian cycle in time $O(n\log n + n k^2)+O(k^{5k})$, where $k$ is the number of interior vertices.

Let $I$ denote the set of interior vertices and let $B= V(H)\setminus I$ denote the set of boundary vertices of $H$.
Let $w_1,\ldots,w_m$ denote the boundary vertices in counterclockwise order where $w_1$ is chosen arbitrarily and indices are used modulo $m$ (in particular, $w_0=w_m$).
The case $k\leq 1$ needs to be handled separately and is covered by the following observation.

\begin{observation}\label{obs:FTPnearConvex}
	If $H$ has at most one interior vertex, then $H$ contains a plane Hamiltonian cycle if and only if there is at most one pair of vertices $w_i$, $w_{i+1}$ of the same color.
\end{observation}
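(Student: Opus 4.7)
The plan is to split on the value of $k$ and in each case combine a description of the plane Hamiltonian cycles on (almost) convex point sets with a parity observation. The parity observation is that walking once around the cyclic boundary $w_1, \ldots, w_m$ flips colors an even number of times, so the number of monochromatic pairs $w_i, w_{i+1}$ has the same parity as $m = 2n - k$. Hence for $k = 0$ this count is even and for $k = 1$ it is odd, so the hypothesis ``at most one monochromatic pair'' is equivalent to ``zero'' in the first case and to ``exactly one'' in the second.

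For $k = 0$, all $m = 2n$ vertices lie in convex position. The only plane Hamiltonian cycle on a convex point set is the convex-hull cycle itself, since any other cyclic order introduces two chord edges whose endpoints interleave along the hull and therefore cross. This cycle is a valid cycle of $H$ iff every consecutive pair $w_i, w_{i+1}$ is bichromatic, which is exactly the zero-monochromatic-pairs condition.

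For $k = 1$, let $v$ be the unique interior vertex. The key structural claim is that every plane Hamiltonian cycle on $V(H)$ is an \emph{insertion cycle} of the form $w_1, \ldots, w_i, v, w_{i+1}, \ldots, w_m$ for some index $i$, i.e., the convex-hull cycle with $v$ spliced between two consecutive boundary vertices. One direction is easy: such a cycle is plane because $v$ is interior and the two added edges $vw_i$ and $vw_{i+1}$ only create a small triangular detour between two consecutive hull vertices. For the converse, let $w_a$ and $w_b$ be the two neighbors of $v$ in a plane Hamiltonian cycle $C$. I would first show that $w_a$ and $w_b$ are adjacent on the convex hull: if not, the hull splits into two nonempty arcs $A_1, A_2$ between $w_a$ and $w_b$, and the plane Hamiltonian path $C - v$ from $w_a$ to $w_b$ must contain some edge $pq$ with $p \in A_1$ and $q \in A_2$, which is a chord separating $w_a$ from $w_b$; since $v$ lies on exactly one side of this chord, say the $w_a$ side, the segment $vw_b$ is forced to cross $pq$, contradicting planarity of $C$. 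With $w_a, w_b$ adjacent, an inductive ``arc-peeling'' argument on the convex point set $\{w_1, \ldots, w_m\}$ (at each step of a plane Hamiltonian path the remaining unvisited vertices form an arc of the hull, and one must proceed to an endpoint of that arc to avoid future crossings) shows that $C - v$ is uniquely the hull path going the long way around, and adding $v$ back yields an insertion cycle. This structural claim, and in particular the crossing argument, is the step I expect to be the main obstacle.

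Given the structural claim, the observation is immediate: an insertion cycle at slot $i$ lies in $H$ iff every $w_j w_{j+1}$ with $j \neq i$ is bichromatic while $w_i$ and $w_{i+1}$ share a common color (necessarily opposite to that of $v$). This is precisely the condition that the boundary has exactly one monochromatic pair, which by the parity observation is equivalent to ``at most one''. Moreover, when such a pair exists the boundary color counts split as $n$ and $n-1$, forcing $v$ to take the minority color, so the insertion cycle does lie in $H$.
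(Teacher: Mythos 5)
Your proof is correct. Note that the paper offers no proof of this statement at all: it is labelled an Observation and followed only by a remark about how to verify the condition algorithmically, so there is no "paper approach" to compare against --- you have supplied the missing argument in full. The two pillars of your argument are sound: (i) the structural claim that every plane Hamiltonian cycle is the convex-hull cycle (for $k=0$) or an insertion cycle (for $k=1$), where your chord-crossing argument for why the two neighbours of the interior vertex $v$ must be consecutive on the hull goes through (the segment $vw_b$ stays inside the hull, so it must exit the region cut off by the chord $pq$ through the chord itself); and (ii) the parity observation that the number of monochromatic consecutive pairs is congruent to $m=2n-k$ modulo $2$, which cleanly converts "at most one" into "zero" when $k=0$ and "exactly one" when $k=1$, and whose counting consequence ($n$ versus $n-1$ boundary vertices of the two colors) forces $v$ to have the color opposite to the unique monochromatic pair, so that the insertion cycle really does lie in $H$. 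A minor simplification for the $k=1$ uniqueness step: rather than arc-peeling, you can add the hull edge $w_aw_b$ to the plane path $C-v$ to obtain a plane Hamiltonian cycle on the convex boundary set, which by your $k=0$ case must be the hull cycle. (As with the paper's statement itself, the degenerate case $n=1$, where no Hamiltonian cycle exists regardless of colors, is implicitly excluded.)
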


This can be easily checked by computing the convex hull of $V(H)$ and checking the boundary vertices in counterclockwise order in $O(n\log n)$ time~\cite{BCKO08}.
We assume $k\geq 2$ for the rest of the section.

\subsection{Structural Considerations}

First, we describe some structural insights that will used by the algorithm.
The main idea is to split $B$ into $O(k^2)$ parts, called critical arcs, such that it is sufficient to consider only few vertices within each part.
A vertex $w_i\in B$ is \lightdfn{critical} if $w_i$ and $w_{i-1}$ are of the same color (\enquote{first kind}) or there is a line through two vertices from $I$ that separates $w_i$ from $w_{i-1}$ (\enquote{second kind}).
\Cref{fig:criticalPoints} shows an example.
Let $S=\{w_{a_1},\ldots,w_{a_s}\}$ denote the set of critical vertices where $a_1 < a_2 < \cdots < a_s $.
For each critical vertex $w_{a_i}$  the \lightdfn{critical arc at $a_i$} is the set $\{w_{a_i},\ldots,w_{a_{i+1}-1}\}$ of boundary vertices between $w_{a_i}$ and the next critical vertex (including the first and excluding the latter critical vertex).
Observe that the critical arcs are pairwise disjoint and form a partition of $B$.
The following \lcnamecref{lem:criticalArcs} describes the properties of critical arcs needed later.
For given vertices $u$ and $v$ we denote by $uv$ the line segment between $u$ and $v$, regardless of whether $uv$ is an edge of $H$ or not.

\begin{restatable}{lemma}{arcLem}\label{lem:criticalArcs}
	Let $u$, $v\in I$, let $A$ and $\bar{A}$ be distinct critical arcs, and let $w$, $w'\in A$.
	\begin{arcproperties}
		\item Each critical arc is alternately colored along $B$.\label{enum:criticalArcAlternately}
		
		\item The vertices $w$ and $w'$ lie on the same side of the line through $uv$.\label{enum:criticalOrderType}
		
		\item Either all line segments between $u$ and $A$ cross all line segments between $v$ and $\bar{A}$, or none of these segments cross.\label{enum:criticalDistinctCrossing}
		
		\item If $w$ comes before $w'$ in a counterclockwise traversal of $A$, then $uw$ and $vw'$ do not cross if and only if $A$ lies to the right of the line through $u$ in direction of $v$.\label{enum:criticalSameCrossing}
	\end{arcproperties}
\end{restatable}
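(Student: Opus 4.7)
The plan is to derive all four properties from the defining structure of critical arcs. The key observation is that inside a critical arc $A$, no vertex is critical, which means: consecutive vertices in $A$ have distinct colors (otherwise the later one would be critical of the first kind), and no line through two interior vertices separates two consecutive vertices in $A$ (otherwise the later one would be critical of the second kind). Property~(A1) follows immediately from the first point. Property~(A2) follows by applying the second point transitively to the specific line through $u,v\in I$, which shows that all of $A$ lies on the same side of this line.

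For (A3), I first use (A2) (applied to $A$ and, symmetrically, to $\bar{A}$) to place each arc entirely on one side of line $uv$, and then split into two cases. If $A$ and $\bar{A}$ lie on opposite sides of line $uv$, then for every $w\in A$ and $\bar{w}\in\bar{A}$ the segments $uw$ and $v\bar{w}$ sit in opposite closed half-planes of line $uv$; they can therefore meet only on line $uv$ itself, but segment $uw$ touches line $uv$ only at $u$, segment $v\bar{w}$ only at $v$, and $u\ne v$, so no crossing ever occurs. If $A$ and $\bar{A}$ lie on the same side of line $uv$, consider the convex region $R$ bounded by the chord that line $uv$ cuts inside the convex hull of $V(H)$ together with the corresponding arc of this hull. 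Both $u$ and $v$ lie on the chord part of $\partial R$ while (by general position) both $w$ and $\bar{w}$ lie on the arc part, so $uw$ and $v\bar{w}$ are chords of $R$. Two chords of a convex region cross if and only if their endpoints interleave along its boundary, and the interleaving pattern is determined by the cyclic order of $u,v$ on the chord part of $\partial R$ together with the relative order of the arcs $A,\bar{A}$ on the arc part; neither depends on the specific choice of $w$ or $\bar{w}$, yielding the all-or-nothing dichotomy.

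For (A4), I reuse the convex-region setup with the single arc $A$ (on the side of line $uv$ given by (A2)), so that $uw$ and $vw'$ are chords of the same convex region~$R$. The counterclockwise traversal of $\partial R$ uses the hull arc in the same direction as the counterclockwise order on $V(H)$, but traverses the chord part in the direction of $\vec{uv}$ when $A$ lies to the left of $\vec{uv}$ and opposite to $\vec{uv}$ when $A$ lies to the right. Combined with the hypothesis that $w$ comes before $w'$ in the counterclockwise order on $A$, this pins down the cyclic order on $\partial R$: it is $u,v,w,w'$ if $A$ is to the left of $\vec{uv}$ (so $uw,vw'$ interleave and cross), and $v,u,w,w'$ if $A$ is to the right (so they do not interleave and do not cross). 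Therefore $uw$ and $vw'$ fail to cross precisely when $A$ lies to the right of $\vec{uv}$.

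The main obstacle I expect is the bookkeeping in the same-side case of (A3) and in (A4): one must correctly identify the counterclockwise orientation of $\partial R$, and in particular note that its chord part is traversed in the direction of $\vec{uv}$ exactly when $R$ lies to the left of $\vec{uv}$. All remaining components follow directly from the definitions of critical vertices and the standard chord-interleaving criterion in a convex region.
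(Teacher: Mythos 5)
Your proof is correct and takes essentially the same route as the paper: \cref{enum:criticalArcAlternately,enum:criticalOrderType,enum:criticalDistinctCrossing} are read off from the definition of critical vertices (no monochromatic consecutive pair inside an arc, no separating line through two interior vertices), and \cref{enum:criticalSameCrossing} reduces to a planar ordering argument once $A$ is placed strictly on one side of the line through $u$ and $v$. The only cosmetic difference is the device used for the crossing test: you invoke the chord-interleaving criterion in the convex region that line $uv$ cuts off from the hull, while the paper rotates $\ell_{uv}$ about $u$ to order $w$ before $w'$ --- these are equivalent, and your version has the minor merit of also making the all-or-nothing claim in \cref{enum:criticalDistinctCrossing} fully explicit.
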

\begin{proof}
	\Cref{enum:criticalArcAlternately,enum:criticalOrderType,enum:criticalDistinctCrossing} follow directly from the definition of critical vertices and arcs.
	To prove \cref{enum:criticalSameCrossing}, consider the lines $\ell_{uv}$, through $u$ and $v$, and $\ell_{ww'}$, through $w$ and $w'$.
	By the definition of critical arcs, $\ell_{uv}$ does not intersect $A$.
	Hence, $A$ lies completely on one of the two sides of $\ell_{uv}$.
	In particular, $w$ and $w'$ lie on the same side.
	This also shows that $u$ and $v$ lie on the same side of $\ell_{ww'}$, as $w$ and $w'$ are boundary vertices of $H$.
	If $w$ comes before $w'$ in a counterclockwise traversal of~$A$, then rotating $\ell_{uv}$ about $u$ meets $w$ before it meets $w'$.
	From this we see that $uw$ and $vw'$ do not cross if and only if $A$ lies to the right of $\ell_{uv}$.
	\qed
\end{proof}

\begin{figure}
	\centering
	\includegraphics{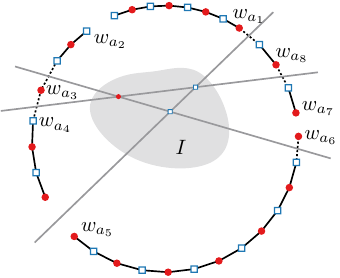}
	\caption[The critical vertices of a geometric complete bipartite host graph]
	{
		The vertices $w_{a_2}$, $w_{a_5}$, and $w_{a_7}$ are critical vertices of the first kind, $w_{a_1}$, $w_{a_3}$, $w_{a_4}$, $w_{a_6}$, and $w_{a_8}$ are critical vertices of the second kind.
		Critical arcs are shown by solid black edges.
	}
	\label{fig:criticalPoints}
\end{figure}

We also use the following properties of plane Hamiltonian cycles which are straightforward to check.

\begin{observation}\label{lem:propertiesHC}
	Let $C$ be a plane Hamiltonian cycle in $H$.
	\begin{properties}
		\item There are at most $k$ critical vertices of the first kind.\label[property]{enum:HCcriticalpoint}
		\item The cycle $C$ visits the boundary vertices in a cyclic order along $B$.\label[property]{enum:HCcyclicorder}
	\end{properties}
\end{observation}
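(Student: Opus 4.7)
The plan is to establish (P2) first and then derive (P1) as an easy consequence.

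For (P2), I would argue by contradiction. Suppose the cyclic order in which $C$ visits the boundary vertices differs from the convex-hull cyclic order $w_1, \ldots, w_m$. Then there exist four boundary vertices $w_i, w_j, w_k, w_l$ with $i<j<k<l$ along $B$ whose cyclic order along $C$ is $w_i, w_k, w_j, w_l$. In this situation $C$ contains a subpath $P_1$ from $w_i$ to $w_k$ (avoiding $w_j$ and $w_l$) and a vertex-disjoint subpath $P_2$ from $w_j$ to $w_l$ (avoiding $w_i$ and $w_k$), both drawn inside the closed convex hull of $V(H)$. Together with the convex-hull arc from $w_i$ to $w_k$ that passes through $w_j$, the path $P_1$ forms a Jordan curve, and the complement of $P_1$ inside the convex hull has exactly two connected components, one whose boundary arc contains $w_j$ and the other whose boundary arc contains $w_l$. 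Hence $P_2$, which joins $w_j$ to $w_l$ and stays inside the convex hull, must cross $P_1$. Since $P_1$ and $P_2$ are vertex-disjoint, this crossing is an edge crossing, contradicting the planarity of $C$.

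For (P1), I would invoke (P2). Each critical vertex of the first kind $w_{a_i}$ has the same color as $w_{a_i-1}$. By (P2), these two are consecutive among the boundary vertices of $C$, so the subpath of $C$ joining $w_{a_i-1}$ to $w_{a_i}$ and containing no other boundary vertex must have at least one intermediate vertex, because $H$ is bipartite and consecutive vertices of $C$ have different colors. Any such intermediate vertex is necessarily an interior vertex. As distinct critical vertices of the first kind induce vertex-disjoint such subpaths, each of them requires a private interior vertex, so the number of critical vertices of the first kind is at most $k$.

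The main obstacle I anticipate is the Jordan-curve step in (P2), specifically checking that inside the convex hull, $w_j$ and $w_l$ lie in the closures of different components of the complement of $P_1$. This relies on choosing the right convex-hull arc to close $P_1$ into a Jordan curve and on a careful but standard application of the Jordan curve theorem; the folklore statement that convex-hull vertices appear along any plane Hamiltonian cycle in convex-hull order is precisely what is being used here. The subsequent deduction of (P1) is routine bookkeeping.
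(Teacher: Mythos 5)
The paper offers no proof of this observation (it is declared ``straightforward to check''), so there is no written argument to compare against; your proof supplies exactly the standard one. \cref{enum:HCcyclicorder} is the folklore fact that convex-hull vertices appear along any plane spanning cycle in convex-position order, proved via the interleaving-pairs/Jordan-curve argument, and \cref{enum:HCcriticalpoint} then follows correctly because each same-colored consecutive pair $w_{a_i-1},w_{a_i}$ forces at least one private interior vertex on the (pairwise internally disjoint) arc of $C$ between them. The only cosmetic wrinkle is your choice of closing arc: the hull arc through $w_j$ places $w_j$ on the closing curve itself rather than strictly inside a region; it is cleaner to close $P_1$ with an arc lying strictly outside the convex hull (except at $w_i$ and $w_k$), after which the separation of $w_j$ from $w_l$ and the forced crossing with $P_2$ go through as you describe.
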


The idea of the algorithm is to iterate over all spanning cycles of the interior vertices $I$ (there are $O(k!)$ many) and check if
one of them can be extended to the whole graph $H$. 
Observe that the assumption $k\geq 2$ ensures that there are at least two critical arcs.

An \lightdfn{initial cycle} $F$ of $I$ is a not-necessarily plane, geometric directed spanning cycle with vertex set $I$ (that may use oriented edges from $H$ as well as edges not in $H$), together with a nonempty subset $G_F$ of the edges of $F$, called \lightdfn{gaps}, such that the edges of $F$ not in $G_F$ (called \lightdfn{fixed edges}) are in $H$ and do not cross each other.
The gaps represent those edges in $F$ that should be replaced with longer paths using boundary vertices from~$B$. 
In particular, every gap $uv$ will be replaced by a path $u,w_i,w_{i+1},\ldots,w_{i+j},v$, where $w_i$, $w_{i+1},\ldots,w_{i+j}\in B$.
Our definition includes that in case $k=2$, $F$ is the directed cycle with just two edges between the two vertices in $I$.
For technical reasons, if two gaps are adjacent in~$F$, we insert into $F$ a fixed zero-length dummy edge at the common endpoint and color the new vertices with the color of the replaced vertex.
Note that dummy edges have both endpoints of the same color.
We hence can assume for the rest of the section that each endpoint of a gap is incident to only this one gap.
We call the endpoints of gaps in $G_F$ \lightdfn{gap vertices}. 

To extend $F$ to $B$, for every gap vertex $u$ we select a critical arc $A(u)$, and try to place a (new) neighbor for $u$ in $A(u)$. 
The crucial observation is that we do not need to check all individual vertices from the selected critical arcs to decide whether $F$ can be extended to $B$ in this way.
Instead, we find necessary and sufficient conditions for the selection of arcs alone to be \enquote{feasible}.
This gives a running time per selection that is independent of $n$ as needed.

%
Assume we have selected a critical arc $A(u)$ for each gap vertex $u$, where each arc might be selected multiple times.
We call the selection of critical arcs \lightdfn{feasible} if a labeling $u_1,\ldots,u_g$ of the gap vertices along $F$ exists, such that the following conditions hold (where $A_i=A(u_i)$):
\begin{conditions}
	\item 
	The selected arcs $A_1,\ldots,A_g$ appear counterclockwise along $B$ in this order.
	Each repeatedly selected arc forms an interval in this ordering.
	If $A_{i}=A_{i+1}$ for some $i$ (and $u_iu_{i+1}$ is not a dummy edge pair), then $A_{i}$ lies to the right of the line through $u_i$ in direction $u_{i+1}$.
\label[condition]{enum:feasibleOrder}
	
	\item For each $i$, the line segments between $u_i$ and $A_{i}$ do not cross fixed edges from $F$.\label[condition]{enum:feasibleCrossF}
	
	\item If $A_{i} \neq A_{j}$, then the line segments between $u_i$ and $A_i$ do not cross line segments between $u_j$ and $A_{j}$.\label[condition]{enum:feasibleCrossArcs}
	
	\item If $u_iu_{i+1}$ is a gap, then there is no critical vertex of the first kind between $A_i$ and $A_{i+1}$, that is, the union of all critical arcs from $A_i$, counterclockwise along $B$ up to $A_{i+1}$ (including the first and the latter) is alternately colored.\label[condition]{enum:feasibleAlternatingPaths}
	
	\item If $u_iu_{i+1}$ is not a gap and both vertices are of the same color, then \mbox{$A_i\neq A_{i+1}$}.\label[condition]{enum:feasibleMonoGap}
	
	\item If $u_iu_{i+1}$ is not a gap and $A_i\neq A_{i+1}$, then $A_i$ and $A_{i+1}$ are adjacent along~$B$, the (counterclockwise) last vertex in $A_i$ is of opposite color to~$u_i$, and the first vertex in $A_{i+1}$ is of opposite color to $u_{i+1}$.\label[condition]{enum:feasibleNongap}
	
	\item For each selected critical arc $A$ we have $\lvert A\rvert \geq g(A) - \mathrm{mc}(A) + \varepsilon(A),$
	where $g(A)$ is the number of gap vertices $u$ with $A(u)=A$, $\mathrm{mc}(A)$ is the number of gaps $uv$ with $uv\not\in H$ and $A(u)=A(v)=A$, and $\varepsilon(A)=1$ if the (counterclockwise) first vertex in $A$ is of the same color as the (counterclockwise) first gap vertex with $A(u)=u$ and $\varepsilon(A)=0$ otherwise.\label[condition]{enum:feasibleSize}
\end{conditions}

The following \lcnamecref{lem:FPTfeasible} shows that these conditions are sufficient and necessary.
We say that $F$ and the selected arcs are \lightdfn{Hamiltonian} if there is a plane Hamiltonian cycle $C$ in $H$ such that a counterclockwise traversal of $C$ visits the vertices in $I$ in the order given by $F$ and for each gap vertex $u$ there is an edge $uw$ in $C$ with $w\in A(u)$.

\begin{restatable}{lemma}{feasibleSelection}\label{lem:FPTfeasible}
	An initial cycle and a selection of critical arcs are Hamiltonian if and only if the selection is feasible.
\end{restatable}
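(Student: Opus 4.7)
The plan is to prove both directions of the equivalence. For necessity (Hamiltonian $\Rightarrow$ feasible), start with a plane Hamiltonian cycle $C$ in $H$ that realizes $F$ and the arc selection, and label the gap vertices $u_1,\ldots,u_g$ in the order they appear along $F$ under a counterclockwise traversal of $C$. By \cref{enum:HCcyclicorder}, the boundary neighbors $\beta(u_i) \in A(u_i)$ appear along $B$ in the same counterclockwise order, which establishes the arc ordering and the interval structure of \cref{enum:feasibleOrder}; the line condition in the repeated-arc case is extracted from \cref{enum:criticalSameCrossing}, applied to the edges $u_iw,\,u_{i+1}w' \in C$ with $w,w'$ in a common arc. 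Conditions \cref{enum:feasibleCrossF,enum:feasibleCrossArcs} follow from planarity of $C$, using \cref{enum:criticalDistinctCrossing} to promote "some segments do not cross" to "no segments cross" between distinct arcs. For \cref{enum:feasibleAlternatingPaths}, observe that the subpath of $C$ between $u_i$ and $u_{i+1}$ in the case of a gap uses exactly the boundary vertices of the arcs from $A_i$ to $A_{i+1}$ in counterclockwise order and is properly $2$-colored, so no critical vertex of the first kind can appear inside. Conditions \cref{enum:feasibleMonoGap,enum:feasibleNongap} are forced by the bipartiteness of $H$ together with the fact that a non-gap edge $u_iu_{i+1}$ lies in $C$ and is therefore bichromatic. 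Finally, \cref{enum:feasibleSize} is a counting argument: each $u$ with $A(u)=A$ contributes a distinct boundary neighbor $\beta(u) \in A$, except that each monochromatic gap with both endpoints assigned to $A$ allows the two $\beta$s to coincide in a single shared vertex (accounting for $-\mathrm{mc}(A)$), while the term $\varepsilon(A)$ accounts for the case where the first boundary vertex of $A$ has the wrong color to serve as the first $\beta$ and must instead be consumed by the adjacent gap span.

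For sufficiency (feasibility $\Rightarrow$ Hamiltonian), construct $C$ explicitly using the labeling from \cref{enum:feasibleOrder}. Allocate boundary vertices within each selected arc $A$ in counterclockwise order: reserve its first vertex for an adjacent gap span if $\varepsilon(A) = 1$, then assign the next $g(A) - \mathrm{mc}(A)$ vertices as the boundary neighbors $\beta(u_i)$ of the gap vertices assigned to $A$, sharing a single vertex between the two endpoints of each monochromatic gap fully contained in $A$ (this is possible by \cref{enum:feasibleSize}, and it forces $\beta$s to have the opposite color of the adjacent gap vertex by \cref{enum:criticalArcAlternately} and the chosen parity offset $\varepsilon(A)$). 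Between gap vertices $u_i, u_{i+1}$ connected by a gap of $F$ with $A_i \neq A_{i+1}$, splice in the unused trailing vertices of $A_i$, the entirety of the intermediate arcs, and the leading vertices of $A_{i+1}$ as a path from $\beta(u_i)$ to $\beta(u_{i+1})$, which is properly $2$-colored by \cref{enum:criticalArcAlternately,enum:feasibleAlternatingPaths}. Between gap vertices connected by a non-gap edge (or path of non-gap edges, all fixed in $F$), keep that subpath of $F$ directly, and use \cref{enum:feasibleNongap} to ensure that when $A_i \neq A_{i+1}$ the endpoints $\beta(u_i), \beta(u_{i+1})$ are at the adjacent boundary of $A_i, A_{i+1}$ respectively and have the correct colors.

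Verifying that the construction yields a plane Hamiltonian cycle of $H$ then splits into three parts: every vertex of $H$ is visited exactly once (by the partition of $B$ into critical arcs and the tight consumption mandated by \cref{enum:feasibleSize}); every edge is in $H$ (via the alternating-color arguments above together with \cref{enum:feasibleMonoGap,enum:feasibleNongap}); and no two edges cross, which reduces to \cref{enum:feasibleCrossF,enum:feasibleCrossArcs} for edges of the form $u_i\beta(u_i)$, to \cref{enum:criticalDistinctCrossing,enum:criticalSameCrossing} together with the line clause of \cref{enum:feasibleOrder} for segments incident to the same arc, and to \cref{enum:criticalOrderType} for the segments inside a single gap span.

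The main obstacle is the joint combinatorial-geometric bookkeeping in the sufficiency direction: precisely tracking, within a multi-visited arc $A$, which boundary vertices serve as $\beta$s for individual gap vertices, which are shared between the endpoints of a monochromatic intra-arc gap, and which are "spent" on spanning a multi-arc gap, while simultaneously ruling out crossings. The interaction between the purely geometric line condition in \cref{enum:feasibleOrder} and the combinatorial color-alternation requirement of \cref{enum:feasibleAlternatingPaths} is the delicate point; in particular, verifying that a non-gap pair $u_iu_{i+1}$ with $A_i=A_{i+1}$ can always be realized without creating a crossing with the $u_i\beta(u_i),u_{i+1}\beta(u_{i+1})$ segments depends critically on the "$A_i$ lies to the right of the line through $u_iu_{i+1}$" clause in \cref{enum:feasibleOrder} together with \cref{enum:criticalSameCrossing}.
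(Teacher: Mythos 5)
Your plan follows essentially the same route as the paper: necessity by traversing the Hamiltonian cycle, labeling the gap vertices in traversal order, and verifying \cref{enum:feasibleOrder}--\cref{enum:feasibleSize} one by one via \cref{enum:criticalArcAlternately,enum:criticalOrderType,enum:criticalDistinctCrossing,enum:criticalSameCrossing} and \cref{enum:HCcyclicorder}; sufficiency by greedily assigning each gap vertex a neighbor $w_{x_i}$ in its arc in counterclockwise order (sharing a vertex across monochromatic gaps, offsetting by $\varepsilon(A)$), splicing in the boundary paths, and checking membership in $H$, Hamiltonicity, and planarity separately. The bookkeeping you flag as the main obstacle is exactly the case analysis the paper carries out (first/middle/last gap vertex of an arc interval, preceded by a gap or not), so the proposal is a faithful, if not fully executed, version of the paper's argument.
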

\begin{proof}
	Let $F$ denote an initial cycle and let $A:u\mapsto A(u)$ denote a selection of critical arcs.
	
	First assume that $F$ and the selected critical arcs are Hamiltonian and let $C$ denote a corresponding plane Hamiltonian cycle in $H$.
	We prove that the selection is feasible by individually checking the feasibility conditions.
	Consider a counterclockwise traversal of $C$ such that the second vertex in this traversal is some arbitrary critical vertex.
	Let $u_1,\ldots,u_g$ denote the gap vertices of $F$ as they appear in this traversal of $C$ (which visits all vertices from $F$ as it is Hamiltonian), where we introduced dummy edges for adjacent gaps as described above.	
	Let $A_i=A(u_i)$ and observe that $C$ contains an edge between $u_i$ and $A_i$ by definition.
	\begin{conditions}
		\item By \cref{enum:HCcyclicorder}, the sequence $A_1,\ldots,A_g$ of selected critical arcs appears counterclockwise along $B$, where each repeatedly selected arc forms an interval.
		Indeed, the latter is clearly true for all arcs (for any counterclockwise traversal of $C$) except the ones selected for $u_1$ and $u_g$, which might be identical.
		Because the second vertex in the traversal of $C$ which defines the order $u_1,\ldots,u_g$ is some critical vertex, we have $A_1=A_g$ only if all selected arcs are equal and hence they form an interval as well.
		Finally, if $A_i=A_{i+1}$ and $u_iu_{i+1}$ is not a dummy edge, then $C$ contains edges $u_iw$ and $u_{i+1}w'$ with $w$, $w'\in A_i$ which do not cross, and hence, by \cref{enum:criticalSameCrossing}, $A_i$ lies to the right of the straight line through $u_i$ in direction of $u_{i+1}$.
		\item As $C$ is plane, Hamiltonian, and contains all fixed edges from $F$, there is an edge $u_iw$ in $C$, with $w\in A_i$, not crossing any fixed edge from $F$.
		Consider an arbitrary fixed edge $e$ and the straight line through $u_i$ and an endpoint of $e$ such that $w$ and $e$ lie on opposite sides of the line.
		By \cref{enum:criticalOrderType}, each vertex from $A_i$ lies on the same side of this line as $w$ lies (hence, on the side opposite to $e$).
		Hence, all the line segments between $u_i$ and $A_i$ do not cross $e$.
		
		\item As $C$ is plane, its edge between $u_i$ and $A_i$ and its edge between $u_j$ and $A_{j}$ do not cross each other.
		By \cref{enum:criticalDistinctCrossing}, all the line segments between $u_i$ and $A_i$ do not cross the line segments between $u_j$ and $A_j$.
		
		\item If $u_iu_{i+1}$ is a gap, then, by \cref{enum:HCcyclicorder}, $C$ contains a path from some vertex $w\in A_i$ to some vertex $w'\in A_{i+1}$ which visits all boundary vertices between $w$ and $w'$ along $B$.
		In particular, there is no critical vertex of the first kind between $A_i$ and $A_{i+1}$.
		
		\item If $u_iu_{i+1}$ is not a gap, then $C$ contains edges $u_iw_j$ and $u_{i+1}w_{j+1}$, with $w_{j}\in A_i$ and $w_{j+1}\in A_{i+1}$ by \cref{enum:HCcyclicorder} ($w_j$ and $w_{j+1}$ are adjacent in $B$).
		If $u_i$ and $u_{i+1}$ are of the same color, then $w_j$ and $w_{j+1}$ are of the same color.
		Hence, $w_{j+1}$ is a critical vertex of the first kind.
		Thus, $w_j$ and $w_{j+1}$ are in distinct critical arcs, that is, $A_i\neq A_{i+1}$.
		
		\item If $u_iu_{i+1}$ is not a gap, then $C$ contains edges $u_iw_j$ and $u_{i+1}w_{j+1}$, with $w_{j}\in A_i$ and $w_{j+1}\in A_{i+1}$ by \cref{enum:HCcyclicorder} ($w_j$ and $w_{j+1}$ are adjacent in $B$).
		So, if $A_i\neq A_{i+1}$, then these critical arcs are adjacent along $B$.
		Moreover, $w_i$ is the last vertex in $A_i$ (and hence this vertex is of opposite color to $u_i$ as $C$ is in $H$) and $w_{i+1}$ is the first vertex in $A_{i+1}$ (and hence this vertex is of opposite color to $u_{i+1}$).
		
		\item For each gap vertex $u$ there is an edge between $u$ and $A(u)$ in $C$.
		As $C$ is a cycle, each vertex in $B$ has in $C$ at most two neighbors from $I$.
		If a vertex in $B$ has two distinct neighbors $u$, $v\in I$ in $C$, then $u$ and $v$ are of the same color and $A(u)=A(v)$.
		Since \cref{enum:feasibleMonoGap} holds, $uv\in G_F$.
		This already shows that for every arc $A$ we have $\lvert A\rvert \geq g(A)-\mathrm{mc}(A)$.
		If the  first vertex $w$ in $A$  (in counterclockwise direction) is of the same color as the first gap vertex (in the order $u_1,\ldots, u_g$) with $A(u)=A$, then $w$ has, in $C$, no neighbor from $I$ and hence $\lvert A\rvert \geq g(A)-\mathrm{mc}(A)+1$ in this case.
	\end{conditions}
	
	Now assume that the selection is feasible.
	We shall prove that $F$ and the selection are Hamiltonian.
	To this end we give a construction of a suitable plane Hamiltonian cycle of $H$.
	For this we process the selected arcs in counterclockwise order.
	For each selected arc $A$, the gap vertices with selected arc $A$ form an interval $u_j,\ldots,u_{j+t}$ in the order of gap vertices given by \cref{enum:feasibleOrder}.
	For each $u_i$, with $j\leq i \leq j+t$, we determine a neighbor $w_{x_i}$ in $A$.
	We choose the vertices $w_{x_j},\ldots,w_{x_{j+t}}$ in ascending order with the following strategy:
	First assume that $t=0$, so $A$ is selected for $u_j$ only.
	If $u_{j-1}u_{j}$ is not a gap, let $w_{x_j}$ be the (counterclockwise) first vertex in $A$.
	If $u_{j-1}u_{j}$ is a gap, then let $w_{x_j}$ be the (counterclockwise) last vertex in $A$.
	In both cases, $w_{x_j}$ is of opposite color to $u_j$ by \cref{enum:feasibleNongap}.
	
	Now assume that $t>0$.
	If $u_{j-1}u_{j}$ is not a gap (see \cref{fig:assign}(ii)), let $w_{x_j}$ be the (counterclockwise) first vertex in $A$ as above.
	Again, $w_{x_j}$ is of opposite color to $u_j$ by \cref{enum:feasibleNongap}.
	If $u_{j-1}u_{j}$ is a gap (see \cref{fig:assign}(iii--iv)) we choose $w_{x_j}$ as the first or second vertex in $A$ such that $w_{x_j}$ is of opposite color to $u_j$.
	Such a vertex exists by \cref{enum:feasibleSize,enum:criticalArcAlternately}.
	Consider $i$ with $j< i<j+t$ and assume that $x_j,\ldots,x_{i-1}$ are already chosen.
	If $u_{i-1}u_{i}$ is not a gap (see \cref{fig:assign}(v)), then choose $x_{i}=x_{i-1}+1$.
	Note that the colors of $u_{i-1}$ and $u_i$ differ due to \cref{enum:feasibleMonoGap} and the colors of $u_{i-1}$ and $w_{x_{i-1}}$ differ by construction.
	Hence, the colors of $w_{x_{i}}$ and $u_i$ differ as well.
	Otherwise (see \cref{fig:assign}(vii--viii)) choose $x_{i}\in \{x_{i-1},x_{i-1}+1\}$ such that $w_{x_{i}}$ is of opposite color to $u_{i}$.
	Again, this choice is possible by \cref{enum:feasibleSize,enum:criticalArcAlternately}.
	Finally consider $i=j+t$ and assume that $x_j,\ldots,x_{j+t-1}$ are already chosen.
	If $u_{j+t-1}u_{j+t}$ is not a gap, choose $x_{j+t}=x_{j+t-1}+1$.
	Again, this choice is possible by \cref{enum:feasibleSize,enum:criticalArcAlternately}.
	If $u_{j+t-1}u_{j+t}$ is a gap, choose $w_{x_{j+t}}$ as the (counterclockwise) last vertex in $A$.
	Then $w_{x_{j+t}}$ is of opposite color to $u_{j+t}$ by \cref{enum:feasibleNongap}, as $u_{j+t}u_{j+t+1}$ is a gap and $A(u_{j+t})\neq A(u_{j+t+1})$.
	
	\begin{figure}
		\centering
		\includegraphics{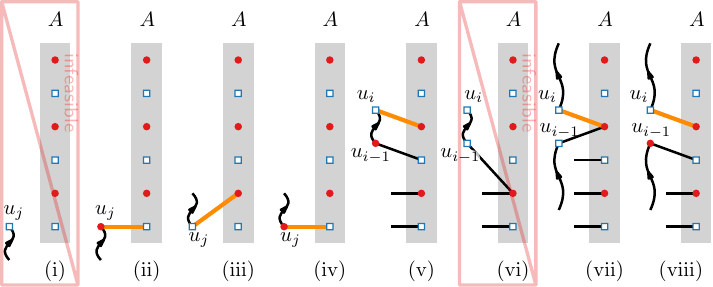}
		\caption[Assigning gap vertices to vertices in their selected arcs]
		{
			Assigning a gap vertex to a vertex in its selected arc $A$.
			The assignment is indicated by orange thick edges, $A$ is shown counterclockwise along $B$ from bottom to top, $F$ is shown by the arrowed black lines.
			Parts (i)--(iv) depict the assignment of the first vertex $u_j$ for $A$,  parts (v)--(viii) the assignment of  subsequent vertices $u_i$ with $j<i<j+t$.
			Note that the situations (i) and (vi) cannot occur in feasible assignments because they violate \cref{enum:feasibleNongap} and \cref{enum:feasibleMonoGap}, respectively.
		}
		\label{fig:assign}
	\end{figure}
	
	So far we chose a neighbor $w_{x_i}$ for each gap vertex $u_i$ such that the edge $w_{x_i}u_i$ is in $H$.
	Let $P_i$ denote the directed path from $w_{x_i}$ to $w_{x_{i+1}}$ counterclockwise along~$B$.
	Let $C$ denote the walk obtained from $F$ by replacing each gap $u_iu_{i+1}\in G_F$ by $P_i$ (keeping the gap vertices, $P_i$ is placed between $u_i$ and $u_{i+1}$).
	We claim that $C$ is a plane Hamiltonian cycle in $H$.
	
	To see that $C$ is in $H$, observe that edges from $F$ in $C$ are exactly the fixed edges, which are by definition in $H$, since $F$ is a initial cycle.
	Moreover, each edge in $C$ between $I$ and $B$ is in $H$ due to the choice of $w_{x_i}$'s above.
	Finally, if $P_i$ is in $C$, then $u_iu_{i+1}\in G_F$ and hence $P_i$ is in $H$ by \cref{enum:feasibleAlternatingPaths}, as the selection of arcs is feasible and $w_{x_i}\in A_i$ and $w_{x_{i+1}}\in A_{i+1}$ by construction.
	
	To see that $C$ is Hamiltonian, observe that each vertex from $I$ is contained in $C$ exactly once, because $F$ is a spanning cycle of $I$.
	It remains to show that the paths $P_i$ are disjoint and cover $B$ entirely.
	The paths $P_i$ are disjoint, because the sequence $u_1,\ldots,u_g$ of gap vertices alternates between the first and last endpoint of a gap (as it follows the direction of~$F$).
	Therefore, also the sequence $w_{x_1},\ldots,w_{x_g}$ alternates between the first and last vertices of the paths~$P_i$.
	Moreover, due to the choice of $x_1,\ldots,x_g$ above, for each $i$ where $u_iu_{i+1}$ is not a gap, $x_{i+1}=x_{i}+1$ holds.
	This shows that each path $P_i$ in $C$ starts right after the previous path ended and ends directly before the next path starts.
	So, $C$ is Hamiltonian.
	
	To see that $C$ is plane, consider the three types of edges in $C$: edges within~$I$, edges between $I$ and $B$, and edges within $B$.
	The latter edges do not cross any edges as they are on the boundary of the convex hull.
	The edges within $I$ are the fixed edges from $F$ which do not cross each other as $F$ is an initial cycle, and which do not cross edges between $I$ and $B$ by \cref{enum:feasibleCrossF} as the selection is feasible.
	Finally, consider edges $u_iw_{x_i}$ and $u_j w_{x_j}$, with $i<j$, between $I$ and $B$.
	If $A_i\neq A_j$, then these edges do not cross by \cref{enum:feasibleCrossArcs}.
	If $A_i=A_j$, then $A_i$ lies to the right of the straight line through $u_i$ in direction of $u_j$, by \cref{enum:feasibleOrder}.
	Due to the choice of $x_i$ and $x_j$ above, we have $x_i\leq x_j$.
	Hence, $u_iw_{x_i}$ and $u_j w_{x_j}$ do not cross by \cref{enum:criticalSameCrossing}.
	So $C$ is plane.
	Altogether we have that $F$ and the selection of critical arcs are Hamiltonian if and only if the selection is feasible.
	\qed
\end{proof}

\cref{lem:FPTfeasible} implies that $H$ contains a plane Hamiltonian cycle if and only if there is an initial cycle of $I$ with a feasible selection of critical arcs.



\subsection{The Algorithm}

The following two lemmas describe how to compute the critical arcs and how to check whether a selection of critical arcs is feasible.

\begin{lemma}\label{lem:FPTcomputeCritical}
	The set of critical arcs can be computed in $O(nk^2 + n\log n)$ time.	
\end{lemma}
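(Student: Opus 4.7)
The plan is to compute the convex hull of $V(H)$ once, detect the critical vertices of the two kinds separately, and then assemble the arcs in a final pass through $B$.

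First, I would compute the convex hull of $V(H)$ via a standard algorithm~\cite{BCKO08} in $O(n\log n)$ time. This simultaneously yields the partition $V(H)=B\cup I$ (with $|I|=k$) and the counterclockwise order $w_1,\ldots,w_m$ of~$B$. A single pass through $w_1,\ldots,w_m$ then identifies all critical vertices of the first kind by marking each $w_i$ with $\phi(w_i)=\phi(w_{i-1})$, in $O(n)$ additional time.

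The main step is detecting the critical vertices of the second kind. I would enumerate all $\binom{k}{2}=O(k^2)$ pairs $\{u,v\}\subseteq I$; for each pair, I compute the line $\ell_{uv}$ through $u$ and $v$, traverse $B$ once, and use a constant-time orientation test to mark every $w_i$ for which $w_{i-1}$ and $w_i$ lie strictly on opposite sides of $\ell_{uv}$. By the definition of critical vertices of the second kind, the union of these marks over all pairs is exactly the required set, at a total cost of $O(nk^2)$. A final linear scan of $B$ then groups the boundary vertices between consecutive critical vertices into critical arcs in $O(n)$ time, and summing all contributions yields the claimed running time $O(n\log n + nk^2)$.

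I do not expect any real obstacle: every substep is either a standard convex-hull routine or a straightforward linear scan, and the dominant cost is the second-kind detection. One could try to exploit that $\ell_{uv}$ crosses the boundary of the convex hull only twice and hence separates at most two consecutive pairs $(w_{i-1},w_i)$, thereby shaving the factor of $n$ down to a logarithmic lookup; but since the stated bound already absorbs the naive $O(nk^2)$ scan, such a refinement is unnecessary.
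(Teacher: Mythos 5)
Your proof is correct and follows essentially the same approach as the paper: compute the convex hull in $O(n\log n)$ time, then test each consecutive boundary pair against the color condition and against all $O(k^2)$ lines through pairs of interior vertices for a total of $O(nk^2)$. The extra detail you provide (the explicit orientation test and the final grouping scan) is a faithful elaboration of the paper's terser argument, and your remark about a possible speed-up is a valid but unneeded refinement.
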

\begin{proof}
	First compute the convex hull of $V(H)$ and label the boundary vertices as $w_1,\ldots,w_m$ in $O(n\log n)$ time~\cite{BCKO08} in order of their appearance.
	For each $w_i$, check whether $w_{i}$ and $w_{i-1}$ are of the same color and whether some line through two vertices from $I$ separates $w_i$ from $w_{i-1}$.	
	This can be done in $O(nk^2)$ time.
	\qed
\end{proof}

\begin{lemma}\label{lem:FPTalgorithm}
	There is an algorithm checking whether a selection of critical arcs for an initial cycle is feasible in time $O(k^2)$, provided that $B$ contains at most $k$ critical vertices of the first kind.
\end{lemma}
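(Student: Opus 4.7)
The plan is to verify the seven feasibility conditions one by one and argue that each can be checked in $O(k^2)$ time. Let $g$ denote the number of gap vertices $u_1,\ldots,u_g$ of $F$; since $F$ has at most $k$ edges (before dummy insertion) and at most $k$ dummy edges are inserted for adjacent gaps, we have $g\leq 2k = O(k)$. As preprocessing (computed once during \cref{lem:FPTcomputeCritical} and reused across selections), I would record for each critical arc $A$ its position in the counterclockwise order along $B$, its size $|A|$, and the colors of its first and last vertex; I would also maintain a sorted list of the positions of the critical vertices of the first kind, which by hypothesis contains at most $k$ entries.

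First I would dispatch the local conditions. \cref{enum:feasibleOrder,enum:feasibleMonoGap,enum:feasibleNongap} depend only on a consecutive pair $(u_i,u_{i+1})$ and the preprocessed data of $A_i$ and $A_{i+1}$, so a single pass through $u_1,\ldots,u_g$, maintaining a pointer into the cyclic sequence of critical arcs to verify the counterclockwise order and interval structure of \cref{enum:feasibleOrder}, suffices. For \cref{enum:feasibleSize}, a second pass aggregates $g(A)$, $\mathrm{mc}(A)$, and $\varepsilon(A)$ for each selected arc $A$ and compares with the precomputed $|A|$. Together, these passes take $O(g) = O(k)$ time.

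Next I would handle the two geometric noncrossing conditions. For \cref{enum:feasibleCrossF}, \cref{enum:criticalOrderType} implies that all segments between a gap vertex $u_i$ and vertices of $A_i$ lie on the same side of any given straight line; hence a single representative segment $u_iw$ with $w\in A_i$ suffices for the test against each of the $O(k)$ fixed edges of $F$, giving $O(k^2)$ over all gap vertices. Analogously, for \cref{enum:feasibleCrossArcs}, \cref{enum:criticalDistinctCrossing} reduces the noncrossing check for each pair $(A_i,A_j)$ of distinct selected arcs to a single segment-segment intersection test, done in $O(1)$ per pair, summing to $O(k^2)$ over all $O(k^2)$ pairs.

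The remaining condition \cref{enum:feasibleAlternatingPaths} is the main obstacle in keeping the bound tight, and this is precisely where the assumption on the number of first-kind critical vertices is exploited. For each gap $u_iu_{i+1}\in G_F$ I need to certify that no first-kind critical vertex lies strictly between $A_i$ and $A_{i+1}$ in the cyclic order along $B$. Using the sorted list of at most $k$ such vertices, a single scan decides this in $O(k)$ time per gap, hence $O(k^2)$ in total (binary search would reduce this further, but is unnecessary for the stated bound). Combining the contributions of all seven conditions yields the claimed $O(k^2)$ running time.
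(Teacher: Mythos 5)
Your proposal is correct and takes essentially the same approach as the paper: both verify the seven feasibility conditions one by one, use \cref{enum:criticalOrderType,enum:criticalDistinctCrossing} to replace the geometric crossing tests by a single representative segment per arc, and bound the total work by $O(k^2)$ via $g\le 2k$ and the assumed bound on first-kind critical vertices. The only minor difference is that the paper explicitly tries every gap vertex as the starting point of the labeling $u_1,\ldots,u_g$ (since feasibility quantifies existentially over labelings, and \cref{enum:feasibleOrder}'s interval requirement is sensitive to where the linear order begins), whereas you run a single pass; this is fine provided your pointer-based check is understood cyclically.
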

\begin{proof}
	We are given an initial cycle $F$ of $I$ and a selection of critical arcs $A$.
	The algorithm checks whether the selection of critical arcs is feasible in the following steps.
	\begin{conditions}
		\item For each gap vertex $u$ consider the sequence $u=u_1,\ldots,u_g$ of gap vertices starting at $u$ and following the direction of $F$.
		Then check whether the corresponding sequence of selected arcs $A_1,\ldots,A_g$ appears counterclockwise along $B$ and whether, in case $A_i=A_{i+1}$, the arc $A_i$ lies to the right of the line through $u_i$ in direction $u_{i+1}$.
		By assumption on the number of critical vertices of the first kind, there are at most $k+k^2$ many critical arcs.
		Hence, this check needs $O(k^2)$ time (as $g\leq 2k$, where the factor $2$ accounts for the dummy edges).
		
		\item Consider each $i$ with $1\leq i \leq g$ and each fixed edge $e$ in $F$.
		By \cref{enum:criticalOrderType}, either all line segments between $u_i$ and $A_i$ cross $e$, or none do.
		So it suffices to check whether the line segment between $u_i$ and the (counterclockwise) first vertex in $A_i$ crosses $e$.
		This needs time $O(k^2)$ in total.
		
		\item Consider each pair $i$, $j$, with $1\leq i < j \leq g$.
		By \cref{enum:criticalDistinctCrossing}, either all line segments between $u_i$ and $A_i$ cross all line segments between $u_j$ and $A_{j}$, or none do.
		So if $A_i\neq A_{j}$, it suffices to check whether the line segment between $u_i$ and the (counterclockwise) first vertex in $A_i$ crosses the line segment between $u_j$ and the first vertex in $A_{j}$.
		This needs time $O(k^2)$ in total.
		
		\item Consider each $i$ with $1\leq i \leq g$.
		If $u_iu_{i+1}\in G_F$, then check whether there is a critical vertex of the first kind between $A_i$ and $A_{i+1}$.
		This can be done in $O(k^2)$ time in total by checking the critical vertices along $B$.
		
		\item Consider each $i$ with $1\leq i \leq g$.
		If $u_iu_{i+1}\not\in G_F$ and both vertices are of the same color, then check $A_i\neq A_{i+1}$.
		This needs $O(k)$ time in total.
		
		\item Consider each $i$ with $1\leq i \leq g$.
		If $u_iu_{i+1}\not\in G_F$ and $A_i\neq A_{i+1}$, check whether $A_i$ and $A_{i+1}$ are adjacent along $B$, the (counterclockwise) last vertex in $A_i$ is of opposite color to $u_i$, and the first vertex in $A_{i+1}$ is of opposite color to $u_{i+1}$.
		This needs $O(k)$ time in total.
		
		\item Determine for each selected arc $A$ the numbers $g(A)$, $\mathrm{mc}(A)$, and $\varepsilon(A)$ by iterating the sequence $u_1,\ldots,u_g$ once.
		Then check whether $\lvert A\rvert \geq g(A)-\mathrm{mc}(A)+\varepsilon(A)$ holds for each~$A$.
		This needs $O(k)$ time.
		\qed
	\end{conditions}
\end{proof}

The final algorithm for the case $k\geq 2$ works as follows.
We compute the critical arcs in $O(n\log n + n k^2)$ time as described in \cref{lem:FPTcomputeCritical}.
%
We check whether there are more than $k$ critical vertices of the first kind (cf.~\cref{enum:HCcriticalpoint} from Observation~\ref{lem:propertiesHC}).
If not, we generate all $O(k!)$ directed spanning cycles of~$I$.  
We then iterate, for each cycle, over all possible sets of gaps ($O(2^k)$ many) and check if this forms an initial cycle (fixed edges are noncrossing and in~$H$) in $O(k^2)$ time for each.
For each initial cycle, we separate adjacent gaps by dummy edges in $O(k)$ time, iterate over all $O(k^{4k})$ selections of critical arcs, and check feasibility for each selection in $O(k^2)$ time as described in \cref{lem:FPTalgorithm}.
Altogether, this needs $O(n\log n + n k^2) + O(k^{5k})$ time, as desired.
In summary, the following \lcnamecref{thm:FPT} holds.

\begin{restatable}{theorem}{ThmFPT}\label{thm:FPT}
	There is an algorithm to check whether a geometric complete bipartite graph with $k$ interior vertices contains a plane Hamiltonian cycle in $O(n \log n + nk^2) + O(k^{5k})$ time.
\end{restatable}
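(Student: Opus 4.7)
The plan is to bolt together the structural characterization of \cref{lem:FPTfeasible} with the algorithmic primitives of \cref{lem:FPTcomputeCritical,lem:FPTalgorithm}. First I would dispose of the degenerate case $k\le 1$ via \cref{obs:FTPnearConvex}: compute the convex hull of $V(H)$ in $O(n\log n)$ time and scan the boundary vertices once, accepting iff at most one consecutive pair along $B$ shares a color. For $k\ge 2$, I would invoke \cref{lem:FPTcomputeCritical} to obtain all critical arcs in $O(n\log n + nk^2)$ time, and then apply \cref{enum:HCcriticalpoint} from \cref{lem:propertiesHC}: if the number of critical vertices of the first kind exceeds $k$, I immediately output ``no,'' since no plane Hamiltonian cycle of $H$ can then exist.

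Otherwise the total number of critical arcs is at most $k + 2\binom{k}{2} = O(k^2)$, because each line through two interior vertices contributes at most two second-kind critical vertices. The main loop would then enumerate all $O(k!)$ directed spanning cycles on $I$, all $O(2^k)$ nonempty gap subsets per cycle (checking in $O(k^2)$ time whether the non-gap edges form an initial cycle, i.e., lie in $H$ and are pairwise noncrossing), and for each resulting initial cycle all arc selections $u\mapsto A(u)$. After inserting dummy edges so that gap endpoints are well-defined, there are at most $2k$ gap vertices, each chosen from $O(k^2)$ arcs, giving $O(k^{4k})$ selections, and each is tested for feasibility in $O(k^2)$ time by \cref{lem:FPTalgorithm}. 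Correctness is immediate from \cref{lem:FPTfeasible}: $H$ has a plane Hamiltonian cycle iff some initial cycle admits a feasible selection. Multiplying the enumeration costs and absorbing polynomial factors into the exponential yields the claimed bound $O(n\log n + nk^2) + O(k^{5k})$.

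The main conceptual obstacle is not in this assembly but already resolved by \cref{lem:FPTfeasible}: the feasibility conditions are precisely what decouple the global choice of arcs from the local placement of specific boundary vertices inside them, so that a finite, $n$-independent enumeration suffices. A secondary subtlety worth highlighting is that the $O(k^2)$ bound on the number of critical arcs only holds after the filter based on \cref{enum:HCcriticalpoint} has been applied; without that filter the number of first-kind critical vertices could be as large as $n$, and the enumeration size would blow up with $n$ rather than depending only on $k$.
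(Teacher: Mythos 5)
Your proposal matches the paper's proof essentially step for step: the same $k\le 1$ base case via \cref{obs:FTPnearConvex}, the same filter on first-kind critical vertices via \cref{enum:HCcriticalpoint}, the same enumeration of $O(k!)$ spanning cycles, $O(2^k)$ gap sets, and $O(k^{4k})$ arc selections checked via \cref{lem:FPTalgorithm}, with correctness supplied by \cref{lem:FPTfeasible}. Your remark that the $O(k^2)$ arc count hinges on first rejecting instances with too many first-kind critical vertices is exactly the role of the hypothesis in \cref{lem:FPTalgorithm}.
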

\begin{proof}
	The upper bound on the running time is derived above.
	It remains to show correctness of the algorithm.
	
	First assume that $H$ contains a plane Hamiltonian cycle $C$.
	Let $F$ denote the directed cycle on $I$ obtained by skipping all vertices from $B$ during a counterclockwise traversal of $C$ and let $G_F$ denote the set of edges in $F$ corresponding to skipped paths from~$C$.
	Then $F$ is an initial cycle of $I$ where $G_F$ is the set of gaps.
	As above, replace each gap vertex that is incident to two gaps with a zero-length dummy edge.
	For each gap vertex $u$ in $F$ let $A(u)$ denote the critical arc in $B$ which contains a neighbor of $u$ in~$C$.
	Then $F$ and the selection of critical arcs given by $A$ are Hamiltonian and hence $A$ is feasible by \cref{lem:FPTfeasible}.
	This means, that the algorithm described in \cref{sec:FPT} correctly shows that $H$ has a plane Hamiltonian by identifying the initial cycle $F$ and the feasible selection $A$.
	
	If, the other way round, the algorithm finds an initial cycle of $I$ with a feasible selection of critical arcs, then these are Hamiltonian by \cref{lem:FPTfeasible} and, hence, $H$ contains a plane Hamiltonian cycle.
	\qed
\end{proof}

\section{Completing disjoint bicolored line segments to a plane Hamiltonian cycle is \np-complete}\label{sec:NP}

In 1989, Rappaport~\cite{DBLP:journals/siamcomp/Rappaport89} proved that completing a set of nonintersecting simple polylines to a plane simple Hamiltonian cycle is \np-complete, by reduction from the Hamiltonian path problem on planar cubic graphs~\cite{DBLP:journals/siamcomp/GareyJT76}.
In 2019, Akitaya et al.~\cite{DBLP:conf/compgeom/AkitayaKRST19} and independently in 2021, Jiang, Jiang, and Jiang~\cite{DBLP:journals/corr/abs-2108-12812} showed that this holds for disjoint line segments as well.
We show that both results generalize for a bicolored version of the problem.

\begin{restatable}{theorem}{ThmNPCSegments}\label{thm:npc-segments}
	It is \np-complete to decide if a set of edges of a geometric complete bipartite graph $H$ can be completed to a plane Hamiltonian cycle in $H$.
\end{restatable}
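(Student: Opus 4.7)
Membership in \np\ is immediate: a plane Hamiltonian cycle of~$H$ containing the prescribed edges is a polynomial-size certificate, and verifying that a given subgraph is such a cycle takes polynomial time. For \np-hardness, I will reduce from the problem of completing a set of pairwise disjoint line segments to a plane Hamiltonian cycle on an uncolored point set in general position, which is \np-complete by Akitaya et al.~\cite{DBLP:conf/compgeom/AkitayaKRST19} and Jiang, Jiang, and Jiang~\cite{DBLP:journals/corr/abs-2108-12812}.

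Given an uncolored instance $(P,S)$ with $|P|=n$ and $|S|=s$, I plan to build a bicolored instance locally around each feature. For every point $p\in P$, I introduce a \colonename point $p_R$ and a \coltwoname point $p_B$ inside a small disk around $p$, together with the prescribed edge $p_Rp_B$. For every segment $uv\in S$, I introduce two new points $a_{uv}$ (\colonename) and $b_{uv}$ (\coltwoname) inside a small disk near the midpoint of $uv$, together with the prescribed edges $u_Ba_{uv}$, $a_{uv}b_{uv}$, and $b_{uv}v_R$. By construction every prescribed edge is bichromatic, and the vertex set contains $n+s$ points of each color. The geometric complete bipartite graph $H$ on this vertex set together with the set $E'$ of all prescribed edges forms the output of the reduction.

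For correctness, I will show that $(P,S)$ admits a plane Hamiltonian cycle containing every segment of $S$ if and only if $H$ admits a plane Hamiltonian cycle containing every edge of $E'$. In the forward direction, I take a plane Hamiltonian cycle $p_1p_2\cdots p_n$ of $(P,S)$ and inflate it by replacing each visit to a point $p_i$ by the path $p_{i,R}p_{i,B}$, each used segment by its prescribed bipartite path $u_Ba_{uv}b_{uv}v_R$, and each remaining connector edge $p_ip_{i+1}$ by the single bichromatic free edge between the free endpoints of the corresponding gadgets. In the backward direction, any plane Hamiltonian cycle of $H$ that contains $E'$ must traverse every vertex gadget through its prescribed edge and every segment gadget through its unique prescribed sub-path; contracting each vertex gadget to the corresponding point and each segment-gadget sub-path to the underlying segment then yields a plane Hamiltonian cycle of $(P,S)$ that uses every segment.

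The main technical difficulty lies in the two free endpoints of each segment gadget, $u_R$ and $v_B$, which have different colors and therefore constrain the direction in which the cycle may traverse the gadget; the reduction must be set up so that these constraints are always simultaneously satisfiable. I plan to handle this by choosing the orientation of each segment gadget in a way that is compatible with a single candidate traversal of the original cycle (possibly introducing small auxiliary gadgets in an empty region of the plane to absorb any parity mismatch), and to verify that the resulting cycle in $H$ alternates colors correctly throughout. Planarity is preserved by choosing the gadget radii smaller than any distance between non-incident features in $(P,S)$, so that every new edge lies in an $\varepsilon$-neighborhood of either an original point, segment, or connector; a final small perturbation inside each disk restores general position without changing the combinatorial crossing structure of any cycle.
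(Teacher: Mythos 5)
Your \np-membership argument and your choice of source problem are fine, but the reduction itself has a gap that you have correctly located yet not closed, and it is not a peripheral technicality --- it is the heart of the matter. Once your prescribed edges are placed, each segment gadget is a path whose two free ends have fixed, distinct colors ($u_R$ is red, $v_B$ is blue), so every Hamiltonian completion of $H$ must traverse each gadget from its red free end to its blue free end (or every gadget the other way around, after a global reversal of the cycle). Hence the colored instance is a yes-instance only if the uncolored instance admits a plane Hamiltonian cycle that traverses every segment in the orientation you hard-wired into the gadgets. Choosing these orientations \enquote{compatibly with a single candidate traversal of the original cycle} is circular: that candidate traversal is exactly the \np-hard object you are trying to detect, and different Hamiltonian completions of the same uncolored instance can induce different segment orientations. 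So an uncolored yes-instance can map to a colored no-instance, and the forward direction of your correctness claim fails. Auxiliary gadgets in an empty region cannot repair this, because the obstruction is not a single global parity but a per-connector bichromaticity constraint between whichever pairs of free ends the (unknown) cycle happens to join.

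The paper avoids this trap by not treating the segment-completion problem as a black box. It goes back inside the reductions of Rappaport and of Jiang, Jiang, and Jiang, observes that at each \enquote{door} (respectively, at each bend gadget) only a constant number of completion patterns can ever occur in any plane Hamiltonian completion, and then exhibits an explicit two-coloring of the gadget endpoints under which every edge that could possibly appear in any completion is bichromatic. That is the missing ingredient in your proposal: you must either control the set of all potential completion edges tightly enough that a consistent coloring can be verified (as the paper does), or design a gadget whose free ends can be entered in either color --- neither of which your construction currently provides.
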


\begin{proof}
We show that the gadgets of Rappaport and Jiang, Jiang, and Jiang still work if they are embedded in a geometric complete bipartite graph.
In particular, we show how to color the vertices in the gadgets such that yes-instances of the uncolored version will become yes-instances of the bipartite graph version (no-instances will remain no-instances in any case).
For our arguments we briefly review the main ideas of the aforementioned reductions.

\begin{figure}
	\centering
	\includegraphics[page=3]{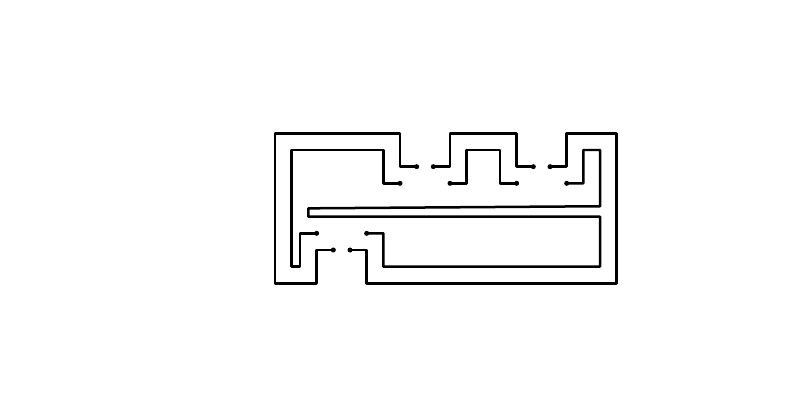}
	\hspace{2ex}
	\includegraphics[page=4]{rappaport-modules-v2.pdf}
	\caption[Rappaport's modules]
	{
		Rappaport's modules.
		\subcaptionheading{Left:} A schematic drawing of the module.
		\subcaptionheading{Right:} The three possible completions for a door and the supporting coloring.
	}
	\label{fig:rappaport-modules}
\end{figure}

We begin with Rappaport's~\cite{DBLP:journals/siamcomp/Rappaport89} reduction from the Hamiltonian path problem on planar cubic graphs. 
The input graph is drawn such that vertices and edges are represented by noncrossing horizontal and vertical line segments, respectively, such that each vertical line segment connects two horizontal line segments corresponding to adjacent vertices.
Such a drawing exists for every planar cubic graph~\cite{DBLP:journals/dcg/RosenstiehlT86}.
Next, the horizontal segments are replaced by so-called \lightdfn{modules}.
Each module is based on a pair of nested polygons consisting solely of axis-parallel segments, see \cref{fig:rappaport-modules} (left).
Small gaps are cut into both polygons at the three \enquote{attachment points}, where the original horizontal segment connects to vertical segments (recall that the input graph is cubic).
This splits all polygons into polylines.
Each pair of gaps belonging to the same attachment point defines a \lightdfn{door}; see \cref{fig:rappaport-modules} (right).
Finally, all vertical segments are removed.

Rappaport~\cite{DBLP:journals/siamcomp/Rappaport89} proved that the final set of segments can be completed to a plane Hamiltonian cycle if and only if the input graph has a Hamiltonian path.
Also, for any such completion the endpoints at every door can be connected in only three different ways (closed, open, T-shaped); see \cref{fig:rappaport-modules} (right).

In our setting each endpoint of a polyline becomes a vertex of the bipartite host graph $H$ (we will add more vertices to the host later).
We now introduce a coloring for the polyline endpoints, processing each module and each polyline in the module independently.
For polylines of the outer polygon, color the counterclockwise first vertex of the polyline \colonename and the last vertex \coltwoname.
For polylines of the inner polygon, color the counterclockwise first vertex \coltwoname and the last vertex \colonename; see \cref{fig:rappaport-modules} (right).
We claim that any segment (connecting two polyline endpoints) that might be used by a completion to a plane Hamiltonian cycle is an edge in $H$.
By construction, it is clear that closed and open doors use only edges from $H$.
For each pair of doors from distinct modules one door comes from the upper part of its module and the other from the lower part of its module.
In this case, the colorings of the two doors are inverted to each other, as if they were rotated by $180^\circ$.
So, doors connected via a T-shaped completion use edges from $H$ (joining two modules) as well.
So  we showed that the endpoints of the polylines in resulting yes-instances of Rappaport's reduction can be colored such that any completion to the desired cycle adds only edges from $H$.

\begin{figure}
	\centering
	\includegraphics[page=2]{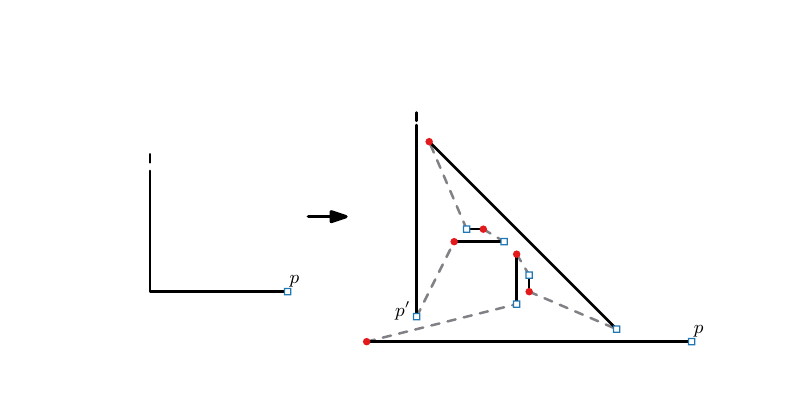}
	\caption[Jiang, Jiang and Jiang's gadget]
	{
		Jiang, Jiang and Jiang's gadget with a suitable coloring.
		In each such coloring, the gadget always passes the color of $p$ on to $p'$.
	}
	\label{fig:jiang-construction}
\end{figure}

Jiang, Jiang, and Jiang~\cite{DBLP:journals/corr/abs-2108-12812} give a reduction from Rappaport's polyline version of the problem to a version that consists only of disjoint line segments.
The reduction replaces each bend of every polyline with a (suitably rotated) gadget consisting of five disjoint line segments; see \cref{fig:jiang-construction}.
Then the authors show that each completion to a noncrossing Hamiltonian cycle adds only segments within the gadgets, or segments already used in the original polyline version.
Again, we reuse the gadget of their reduction.
We add the endpoints of these segments as further vertices to our host graph $H$.
The segments in the gadgets will form the set of edges from our host $H$ that should be completed to a plane Hamiltonian cycle in $H$.
We show that we can color the endpoints of the segments such that yes-instances lead to yes-instances in the bipartite graph version.
We start with a coloring of the polyline endpoints as described above.
The replacements are carried out one by one for every polyline, and the bends in every polyline are processed from one end to the other.
Thus, for each bend that is currently processed, we have one incident segment whose other endpoint $p$ is already colored.
To color the vertices from the gadget, we traverse the segments of the gadget as in \cref{fig:jiang-construction} and color the vertices alternating.
Note that the vertex last colored in the gadget (called $p'$ in \cref{fig:jiang-construction}) receives the same color as $p$.
Hence, the last edge of the polyline(-replacement) is colored correctly, i.e. is from $H$.
This shows that the derived edge set in $H$ can be completed to a plane Hamiltonian cycle in $H$ if and only if the polyline version can be completed to a plane Hamiltonian cycle.
This finishes the proof.
\qed
\end{proof}

\section{Conclusion and open problems}
We have shown that geometric complete multipartite graphs which contain a plane cycle $C$ also admit a shorter plane cycle of at least half the size of $C$.
For some restricted geometric complete bipartite graphs, we have shown that they contain plane cycles of each possible even length.
This prompts the following question (distinguishing between two or more colors).

\begin{problem}
	Let $H$ be a geometric complete multipartite graph that contains a plane cycle of length~$t$. Does $H$ contain a plane cycle of length $t'$ for any (even) $t'$ with $3\leq t'\leq t$?
\end{problem}

   
We have also given an FPT algorithm to decide whether a geometric complete bipartite graph has a plane Hamiltonian cycle. 
It is open whether this problem is NP-hard, which is related to an open question stated by Claverol et al.~\cite{COGST18}.

\begin{problem}\label{conj:ConjectureNP}
	Is it \np-complete to decide if a given geometric complete multipartite graph contains a plane Hamiltonian cycle?
\end{problem}

\bibliographystyle{splncs04}
\bibliography{lit}

\clearpage
\appendix

\section[Omitted material from Section 2]{Omitted material from \cref{sec:monotonicity}}\label{app:monotonicity}

\begin{lemma}\label{lem:anthropomorphicPolygon}
	Each plane geometric cycle $C$ with exactly three principal points, where the principal points form a 3-vertex path in $C$, has at most five vertices.
\end{lemma}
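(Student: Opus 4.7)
I would argue by contradiction. Suppose $C$ is a simple polygon with vertices $v_0, v_1, \ldots, v_{n-1}$ in cyclic order (indices modulo $n$), $n \geq 6$, whose principal points are exactly $v_0, v_1, v_2$. The principal-point definition gives three chords $s_0 = v_{n-1}v_1$, $s_1 = v_0v_2$, $s_2 = v_1v_3$ of $C$, none crossing an edge of $C$. Consequently, each of the three triangles $T_0 = v_{n-1}v_0v_1$, $T_1 = v_0v_1v_2$, $T_2 = v_1v_2v_3$ lies entirely inside or entirely outside the (open) polygon region. Call a principal point a \emph{strict ear} if its triangle is inside, and an \emph{anti-ear} otherwise. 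Note that strict ears are automatically principal points, so by the two-ears theorem at least two of $v_0, v_1, v_2$ are strict ears.

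Next, I would employ the following shared-edge observation: the triangles $T_0, T_1$ share the polygon edge $v_0v_1$, and the polygon interior lies locally on one side of this boundary edge. Since the interior of $T_0$ lies on the side of $v_{n-1}$ and the interior of $T_1$ on the side of $v_2$, if $v_{n-1}$ and $v_2$ lie on opposite sides of the line through $v_0v_1$, then exactly one of $T_0, T_1$ is inside the polygon; the analogous statement holds for $T_1, T_2$ and the edge $v_1v_2$. Combining this with the count of strict ears above, the configuration of in/out for $T_0, T_1, T_2$ reduces, up to the symmetry $v_0 \leftrightarrow v_2$, to three cases: (\textalpha) all three triangles inside (three strict ears); (\textbeta) $T_0, T_1$ inside and $T_2$ outside; (\textgamma) $T_0, T_2$ inside and $T_1$ outside (the pentagon-like case realized by \cref{thm:characterization-rainbowcycles}).

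For each case I would derive a contradiction with $n \geq 6$ through a geometric argument. The key idea is that the "outer arc" $v_2, v_3, \ldots, v_{n-1}, v_0$ of $C$ is confined to a specific region determined by the chords $s_0, s_1, s_2$: in case (\textgamma) this region is the reflex cone at $v_1$ (bounded by the rays from $v_1$ through $v_0$ and through $v_2$); in cases (\textalpha) and (\textbeta) the outer arc is confined to (and must exit through specific portions of) the region bounded by $s_0, s_2$ and the relevant polygon edges. Using the non-principal conditions on $v_3, \ldots, v_{n-1}$ (each diagonal $v_{i-1}v_{i+1}$ must cross $C$), combined with the restriction that the only non-adjacent edges available for such crossings are the nearby edges $v_0v_1, v_1v_2$, and the long edge $v_{n-1}v_0$ or $v_2v_3$, a careful case analysis would show that any intermediate vertex $v_i$ with $3 < i < n-1$ (which exists precisely when $n \geq 6$) either becomes a principal point itself or forces two polygon edges to cross, contradicting the simplicity of $C$. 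The main obstacle is carrying out this final geometric case analysis rigorously, in particular ensuring that in each subcase the placement options for $v_4$ are all inconsistent with the non-principal conditions for $v_3, v_4$, and $v_{n-1}$ simultaneously.
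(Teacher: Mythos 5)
Your setup is sound and mirrors the paper's: classifying each principal point by whether its triangle lies inside or outside the polygon is exactly the paper's ear/mouth dichotomy, and the two-ears theorem is the right tool to constrain the configuration. (You could tighten the case split further: the one-mouth theorem says every nonconvex polygon has a mouth, and a mouth is by definition a principal point, so the all-inside case is immediately impossible for $n\geq 4$; the paper additionally shows the mouth must be the \emph{middle} vertex of the path, eliminating your second case as well.) But the proof stops where it matters most. Everything after ``For each case I would derive a contradiction'' is a description of what a proof would need to do, not a proof: you assert that confining the outer arc to certain regions and invoking the non-principal conditions on $v_3,\ldots,v_{n-1}$ ``would show'' a contradiction, and you yourself flag the remaining case analysis as the main obstacle. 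That case analysis \emph{is} the lemma; without it nothing rules out, say, a long outer arc that snakes so that every intermediate vertex's diagonal crosses $C$.

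The paper closes this gap with a short triangulation argument that you may want to adopt. Fix any triangulation $T$ of the interior of $C$ and let $D$ be its set of diagonals. Every diagonal separates the two ears, and the vertices incident to no diagonal are exactly the ears. Since there are only two ears $x,y$ and one mouth $z$, every diagonal must be incident to $z$: otherwise some diagonal cuts off a sub-polygon avoiding $z$ whose own ear would be a third ear of $C$ (or $x$ or $y$ would fail to be adjacent to $z$ in the path of principal points). Hence $T$ is a fan from $z$, which forces the $x$--$y$ path avoiding $z$ to be a convex chain; the single reflex vertex $z$ can prevent at most two vertices of that chain from being ears, so a chain with more than four vertices would produce a third ear. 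This yields $|V(C)|\leq 5$ without any region-by-region tracking of the outer arc. As it stands, your submission is a plausible plan with the decisive geometric step missing.
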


\begin{proof}
	If $C$ is a triangle, then we are done, so assume $C$ contains at least four vertices.
	A principal point is called an \lightdfn{ear} (a \lightdfn{mouth}) if the edge between its neighbors lies in the interior (exterior) of $C$.
	It is well known that any polygon has at least two ears, and every nonconvex polygon has at least one mouth.
	In a convex polygon, all vertices are ears.
	So $C$ must be nonconvex, and therefore has exactly two ears, call them $x$ and $y$, and one mouth $z$.
	Consider a triangulation $T$ of the interior of $C$ and let $D$ denote the set of edges in $T$ that are not in~$C$.
	Each edge in $D$ splits $C$ into two parts and each part contains one of the ears.
	Moreover, the vertices from $C$ not incident to any edge from $D$ are exactly the ears.
	So the mouth $z$ is incident to some edge from $D$.
	This shows that the mouth $z$ is the middle point in the path formed by the three principal points.
	We also see that $z$ is incident to all edges from $D$, as otherwise there is a third ear or one of the ears ($x$ or $y$) is not a neighbor of $z$.
	Consider the $x$-$y$ path $C'$ in $C$ not containing $z$.
	As $z$ is the only mouth of $C$ and incident to all edges from $D$, we conclude that $C'$ forms a convex chain ($C$ looks like in \cref{fig:anthropomorphicPolygon}).
	We see that at most two of the vertices in $C$ are not ears of $C$ (as $z$ can not \enquote{block} more).
	Hence, $C$ has at most five vertices.
	\qed
\end{proof}

\begin{figure}
	\centering
	\includegraphics{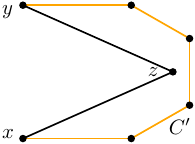}
	\caption{A geometric cycle with two ears $x$ and $y$  and one mouth $z$ forming a path. The chain $C'$ contains further ears if it contains more than four vertices.}
	\label{fig:anthropomorphicPolygon}
\end{figure}

\end{document}